\documentclass[submission,copyright,creativecommons]{eptcs}
 
\usepackage{amssymb}
\usepackage{amsmath}
\usepackage{latexsym}
\usepackage{makeidx}
\usepackage{epsfig}
\usepackage{cancel}
\usepackage{url}
\usepackage{tikz}
\usetikzlibrary{calc} 						
\usetikzlibrary{shapes,arrows}				
\usetikzlibrary{shapes.multipart}			
\usetikzlibrary{positioning}
\usepackage[all,cmtip]{xy}					
\usepackage{tikz-qtree,tikz-qtree-compat}	

\newtheorem{definition}{Definition}
\newtheorem{theorem}{Theorem}
  
\newtheorem{example}{Example}  
\newtheorem{remark}{Remark}  
  
\newtheorem{proposition}{Proposition}    
\newtheorem{corollary}{Corollary}
\newenvironment{proof}{\hskip-\parindent{\sc
Proof}.\ \ }{\hfill$\Box$\vskip\partopsep \vskip\topsep}

\newcommand{\nat}{\Bbb{N}}

\renewcommand{\emptyset}{\varnothing}
\renewcommand{\phi}{\varphi}

\newcommand{\toppos}{{\mbox{\footnotesize$\Lambda$}}} 
\newcommand{\ol}[1]{\overline{#1}}  
\newcommand{\pr}[1]{\mbox{\tt #1}}   
\newcommand{\mt}[1]{\mbox{\sf #1}}

\def\defemb#1#2{\expandafter\def\csname #1\endcsname
                              {\relax\ifmmode #2\else\hbox{$#2$}\fi}}
\defemb{cA}{{\cal A}}
\defemb{cB}{{\cal B}}
\defemb{cC}{{\cal C}}
\defemb{cD}{{\cal D}}
\defemb{cE}{{\cal E}}
\defemb{cF}{{\cal F}}
\defemb{cG}{{\cal G}}
\defemb{cH}{{\cal H}}
\defemb{cI}{{\cal I}}
\defemb{cJ}{{\cal J}}
\defemb{cL}{{\cal L}}
\defemb{cM}{{\cal M}}
\defemb{cN}{{\cal N}}
\defemb{cO}{{\cal O}}
\defemb{cP}{{\cal P}}
\defemb{cQ}{{\cal Q}}
\defemb{cR}{{\cal R}}
\defemb{cS}{{\cal S}}
\defemb{cT}{{\cal T}}
\defemb{cU}{{\cal U}}
\defemb{cV}{{\cal V}}
\defemb{cX}{{\cal X}}
\defemb{cZ}{{\cal Z}}

\defemb{mP}{{\sf P}}

\long\def\comment#1{}

\makeatletter
\newenvironment{prog}{\vspace{0.7ex}\par
\setlength{\parindent}{0.7cm}
\obeylines\@vobeyspaces\tt}{\vspace{0.7ex}\noindent
}
\makeatother
\newcommand{\startprog}{\begin{prog}}
\newcommand{\stopprog}{\end{prog}\noindent}

\makeatletter
\newenvironment{smallprog}{\vspace{0.7ex}\par
\setlength{\parindent}{0.7cm}
\obeylines\@vobeyspaces\tt\small}{\vspace{0.7ex}\noindent
}
\makeatother
\newcommand{\fstartprog}{\begin{smallprog}}
\newcommand{\fstopprog}{\end{smallprog}\noindent}

\makeatletter
\newenvironment{nismallprog}{\vspace{0.7ex}\par
\setlength{\parindent}{0.0cm}
\obeylines\@vobeyspaces\tt\small}{\vspace{0.7ex}\noindent
}
\makeatother
\newcommand{\fnistartprog}{\begin{nismallprog}}
\newcommand{\fnistopprog}{\end{nismallprog}\noindent}

%
% Counters
%

%
% Basic properties
%

%
%
%

\newcommand{\Pos}{{{\cal P}os}}

\defemb{HNF}{{\sf HNF}}
\defemb{NF}{{\sf NF}}
\defemb{GNF}{{\sf GNF}}
\defemb{NFinf}{{\sf NF}^\omega}

%
% Semantics
%

\defemb{Sfunc}{{\sf S}}
\defemb{Sred}{{\sf red}}
\defemb{Shnf}{{\sf hnf}}
\defemb{Snf}{{\sf nf}}
\defemb{SnfIn}{{\sf nf_{in}}}
\defemb{Seval}{{\sf eval}}
\defemb{Sempty}{{\sf empty}}
\defemb{Sinfred}{{\omega{\sf-red}}}
\defemb{Sinfhnf}{{\omega{\sf-hnf}}}
\defemb{Sinfeval}{{\omega{\sf-eval}}}
\defemb{Sinfnf}{{\omega{\sf-nf}}}
\defemb{Sinfgred}{{\infty{\sf-red}}}
\defemb{Sinfghnf}{{\infty{\sf-hnf}}}
\defemb{Sinfgeval}{{\infty{\sf-eval}}}
\defemb{Sinfgnf}{{\infty{\sf-nf}}}
\defemb{SinfgSred}{{\infty{\sf-Sred}}}
\defemb{SinfgShnf}{{\infty{\sf-Shnf}}}
\defemb{SinfgSeval}{{\infty{\sf-Seval}}}
\defemb{SinfgSnf}{{\infty{\sf-Snf}}}
\defemb{SinfSred}{{\omega{\sf-Sred}}}
\defemb{SinfShnf}{{\omega{\sf-Shnf}}}
\defemb{SinfSeval}{{\omega{\sf-Seval}}}
\defemb{SinfSnf}{{\omega{\sf-Snf}}}
\defemb{StransSg}{{\infty{\sf-S\varphi}}}
\defemb{Stransg}{{\infty{\sf-\varphi}}}
\defemb{SinfSg}{{\omega{\sf-S\varphi}}}
\defemb{Sinfg}{{\omega{\sf-\varphi}}}

%
% Abreviaturas textuales
%

\newcommand{\csr}{\mbox{\sc csr\/}}

%
%  Programming Languages and tools
%

\newcommand{\OBJTwo}{{\sf OBJ2}}
\newcommand{\OBJThree}{{\sf OBJ3}}
\newcommand{\CafeOBJ}{{\sf CafeOBJ}}
\newcommand{\Maude}{{\sf Maude}}

\newcommand{\Haskell}{{\sf Haskell}}

\newcommand{\muterm}{\mbox{\sc mu-term\/}}

\newcommand{\AProVE}{{\sf AProVE}}

%
%  Propiedades de las relaciones de reducci—n
%

%
%  Terms and substitutions
%

\newcommand{\Symbols}{{\cF}}

\newcommand{\CSymbols}{{\cC}}
\newcommand{\DSymbols}{{\cD}}

\newcommand{\Variables}{{\cX}}
\newcommand{\TermsOn}[2]{{\cT(#1,#2)}}
\newcommand{\GTermsOn}[1]{{\cT(#1)}}

\newcommand{\Terms}{{\TermsOn{\Symbols}{\Variables}}}

\newcommand{\GTerms}{{\cT(\Symbols)}}
\newcommand{\CTerms}{{\cT(\CSymbols,\Variables)}}
\newcommand{\GCTerms}{{\cT(\CSymbols)}}
\newcommand{\ITerms}{{\cT^\omega(\Symbols,\Variables)}}

\newcommand{\IGCTerms}{{\cT^\omega(\CSymbols)}}

\newcommand{\Var}{{\cal V}ar}

%
%
%

%
% Subterm orderings
%

%
%  TRSs
%

%
% CTRSs
%

%
% Estrategias
%

%
% Funciones de reemplazamiento
%

\newcommand{\Rmaps}[1]{{M_{#1}}}
\newcommand{\CRmaps}[1]{{\mbox{\it CM}_{#1}}}
\newcommand{\muCan}{{\mu^{can}_{\cal R}}}

%
%  Contexto maximal reemplazante
%

%
%
%

\newcommand{\ul}[1]{\underline{#1}}

%
% Flechas
%

\newcommand{\hto}{\hookrightarrow}

\newcommand{\activationlazyrew}[1]{\stackrel{\sf A}{\to}}
\newcommand{\activationlazyrewp}[1]{\toPlusPosSub{\sf A}{}}
\newcommand{\activationlazyrews}[1]{{\toStarPosSub{\sf A}{}}}

\newcommand{\csrew}[1]{\hto_{#1}}

\newcommand{\csrewpos}[2]{\stackrel{#1}{\hto}_{#2}}

%  Narrowing
%

%
% Example symbols
%

\newcommand{\Fadd}{\mt{add}}

\newcommand{\Falt}{\mt{alt}}
\newcommand{\Ffbcons}{\mt{f$_{b:}$}}
\newcommand{\Fcons}{\mt{cons}}

\newcommand{\FconsF}{\mt{consF}}

\newcommand{\FevenNs}{\mt{evenNs}}
\newcommand{\Ffa}{\mt{f$_a$}}
\newcommand{\Ffb}{\mt{f$_b$}}
\newcommand{\Ffrac}{\mt{frac}}
\newcommand{\FhalfPi}{\mt{halfPi}}

\newcommand{\Fincr}{\mt{incr}}
\newcommand{\FL}{\mt{L}}

\newcommand{\Fnats}{\mt{nats}}
\newcommand{\FoddNs}{\mt{oddNs}}

\newcommand{\Fp}{\mt{p}}
\newcommand{\Fprod}{\mt{prod}}

\newcommand{\FprodFrac}{\mt{prodFrac}}
\newcommand{\FprodOfFracs}{\mt{prodOfFracs}}

\newcommand{\FrepItems}{\mt{rep2}}
\newcommand{\FS}{\mt{S}}

\newcommand{\Ftake}{\mt{take}}

\newcommand{\Fzip}{\mt{zip}}

\newcommand{\Fz}{\mt{0}}
\newcommand{\Fone}{\mt{1}}
\newcommand{\Fs}{\mt{s}}

\newcommand{\Ftail}{\mt{tail}}

\newcommand{\Fnil}{\mt{nil}}

\newcommand{\Fa}{\mt{a}}

\newcommand{\Fb}{\mt{b}}

\newcommand{\Ff}{\mt{f}}

%
% Algebraic interpretations
%

%
% Inference systems
%

\newcommand{\bigfracn}[3]{
\begin{array}[b]{c}
\displaystyle #1 \\\hline\displaystyle #2 
\end{array}
\hbox to 0pt{\raisebox{0.7em}{{\tiny (#3)}}}
}

%
% IMP
%

%
% CCS
%

%
% Proof trees
%

% Flow charts
\tikzstyle{decision} = [diamond, draw, fill=yellow!20, text width=5em, text badly centered, minimum height=4em, inner sep=0pt, aspect=2]
\tikzstyle{block} = [rectangle, draw,fill=blue!20, text width=5em, text centered, minimum height=4em, rounded corners]
\tikzstyle{cloud} = [ellipse, draw,fill=red!20, text width=5em, text centered, minimum height=4em]
\tikzstyle{line} = [draw, -latex']
\tikzstyle{blockR} = [rectangle, draw, fill=red!20, text centered, minimum height=4em, rounded corners, minimum height=0.75cm]
\tikzstyle{blockB} = [rectangle, draw, fill=blue!20, text centered, minimum height=4em, rounded corners, minimum height=0.75cm]
\tikzstyle{blockG} = [rectangle, draw, fill=green!20, text centered, minimum height=4em, rounded corners, minimum height=0.75cm]
\tikzstyle{blockY} = [rectangle, draw, fill=yellow!20, text centered, minimum height=4em, rounded corners, minimum height=0.75cm]
\tikzstyle{blockW} = [rectangle, draw, fill=white!20, text centered, minimum height=4em, rounded corners, minimum height=0.75cm]
\tikzstyle{blockK} = [rectangle, draw, fill=gray!20, text centered, minimum height=4em, rounded corners, minimum height=0.75cm]
\tikzstyle{mblockB} = [rectangle, draw, fill=blue!20, text centered, minimum height=4em, double,rounded corners, minimum height=0.75cm]
\tikzstyle{mblockW} = [rectangle, draw, fill=white!20, text centered, minimum height=4em, double,rounded corners, minimum height=0.75cm]
\tikzstyle{mblockR} = [rectangle, draw, fill=red!20, text centered, minimum height=4em, double,rounded corners, minimum height=0.75cm]
\tikzstyle{mblockG} = [rectangle, draw, fill=green!20, text centered, minimum height=4em, double,rounded corners, minimum height=0.75cm]

\tikzstyle{circleW} = [circle, draw, fill=white!20, text centered, minimum height=4em, rounded corners, minimum height=0.75cm]
\tikzstyle{triangleW} = [isosceles triangle, draw, fill=white!20, text centered, shape border rotate = 90, isosceles triangle stretches]
\tikzstyle{triangleG} = [isosceles triangle, draw, fill=green!20, text centered, shape border rotate = 90, isosceles triangle stretches]
\tikzstyle{triangleR} = [isosceles triangle, draw, fill=red!20, text centered, shape border rotate = 90, isosceles triangle stretches]

\begin{document}
\title{Termination of canonical context-sensitive rewriting and productivity of rewrite systems\thanks{Partially 
supported by the EU (FEDER), Spanish 
MINECO TIN 2013-45732-C4-1-P,
and GV PROMETEOII/2015/013.} }

\author{Salvador Lucas
\institute{DSIC,  Universitat Polit\`ecnica de Val\`encia, Spain\\ \url{http://users.dsic.upv.es/~slucas/}}}

\def\titlerunning{Termination of canonical context-sensitive rewriting and productivity of rewrite systems}

\def\authorrunning{Salvador Lucas}

\maketitle

\begin{abstract} 
Termination of programs, i.e., the absence of infinite computations, 
ensures the existence of normal forms for \emph{all} initial expressions, thus providing an
essential ingredient for the definition of a \emph{normalization semantics} for functional programs.
In \emph{lazy} functional languages, though,  \emph{infinite data structures} 
are often delivered as the \emph{outcome} of computations. For instance, the list of all prime numbers 
can be returned as
a neverending \emph{stream} of numerical expressions or data structures. 
If such streams are allowed, 
requiring 
termination is hopeless.
In this setting, the notion of \emph{productivity}  
can be used to provide an account of computations with infinite data structures, as it
``\emph{captures the idea of computability, of progress of infinite-list programs}''
(B.A.\ Sijtsma, On the Productivity of Recursive List Definitions, \emph{ACM Transactions on Programming Languages and Systems} 11(4):633-649, 1989).
However, in the realm of \emph{Term Rewriting Systems}, which can be seen as (first-order, untyped,
unconditional) functional 
programs, termination of \emph{Context-Sensitive Rewriting} (\csr) has been 
showed \emph{equivalent} to productivity of rewrite systems through appropriate
transformations.
In this way, tools for proving termination of \csr\ can be used to prove productivity.
In term rewriting, \csr\ is the restriction of rewriting that arises when 
reductions are allowed on selected arguments of function symbols only.
In this paper we show that well-known results about the computational power of \csr\ are useful
to better understand the existing connections between productivity of rewrite systems and termination of \csr,
and also to obtain more powerful techniques to prove productivity of rewrite systems.

\end{abstract}

\noindent
{\bf Keywords:~}Êcontext-sensitive rewriting,
functional programming, 
productivity,
termination

\section{Introduction}

The computation of \emph{normal forms} of initial expressions provides an appropriate computational
principle for the semantic description of functional programs by means of a \emph{normalization semantics}
where initial expressions are given an associated normal form, i.e., an expression that do not issue any computation.
However, lazy functional languages (like \Haskell\ \cite{HudPeyWad_RepFuncProgLangHaskell_SN92}) 
admit giving {\em 
infinite values} as the meaning of expressions. Infinite 
values are limits of converging infinite sequences of {\em partially defined} 
values which are more and more defined and only contain \emph{constructor symbols}.
An appropriate notion of \emph{progress} in lazy functional computations is given by the notion of 
\emph{productivity} \cite{Sijtsma_ProductivityOfRecursiveListDefinitions_TOPLAS89} which concerns
the progress in the computation of infinite values when normal forms cannot be obtained.

Term Rewriting Systems (TRSs \cite{BaaNip_TermRewAllThat_1998,Ohlebusch_AdvTopicsTermRew_2002,Terese_TermRewritingSystems_2003}) 
provide suitable abstractions for functional programs which are often useful to
investigate their computational properties.
We can see a term rewriting system as a first-order functional program without any kind of type information associated to
any expression, and where all rules in the program are unconditional rules $\ell\to r$ where $\ell$ is a term
$f(\ell_1,\ldots,\ell_k)$ for some function symbol $f$ and terms $\ell_1,\ldots,\ell_k$, and 
$r$ is a term whose variables already occur in $\ell$.
The following example illustrates the use of infinite data structures with term rewriting systems.

\begin{example}\label{ExWallisExtended}
The TRS $\cR$ in Figure \ref{FigExWallisExtended}
\cite[Example 1]{AlaGutLuc_CSDPs_IC10}
can be used to compute approximations to $\frac{\pi}{2}$ as
$\frac{\pi}{2}=lim_{n\to\infty}\frac{2}{1}\frac{2}{3}\frac{4}{3}\frac{4}{5}\cdots\frac{2n}{2n-1}\frac{2n}{2n+1}$
(Wallis' product).
 \begin{figure}[t]
{\small
   \begin{eqnarray}
     \FevenNs & \to & \Fcons(\Fz,\Fincr(\FoddNs))\label{ExWallisExtended_rule1}\\
     \FoddNs & \to & \Fincr(\FevenNs)\label{ExWallisExtended_rule2}\\
     \Fincr(\Fcons(x,xs)) & \to & \Fcons(\Fs(x),\Fincr(xs))\label{ExWallisExtended_rule3}\\
    \Ftake(\Fz,xs) & \to & \Fnil\label{ExWallisExtended_rule4}\\
    \Ftake(\Fs(n),\Fcons(x,xs)) & \to & \FconsF(x,\Ftake(n,xs))\label{ExWallisExtended_rule5}\\
    \Fzip(\Fnil,xs) & \to & \Fnil\label{ExWallisExtended_rule6}\\
    \Fzip(xs, \Fnil) & \to & \Fnil\label{ExWallisExtended_rule7}\\
    \Fzip(\Fcons(x,xs),\Fcons(y,ys)) & 	\to & \Fcons(\Ffrac(x,y),\Fzip(xs,ys))\label{ExWallisExtended_rule8}\\
    \Ftail(\Fcons(x,xs)) & \to & xs\label{ExWallisExtended_rule9}\\
    \FrepItems(\Fnil) & \to & \Fnil\label{ExWallisExtended_rule10}\\
    \FrepItems(\Fcons(x,xs)) & \to & \Fcons(x,\Fcons(x,\FrepItems(xs)))\label{ExWallisExtended_rule11}\\
    \Fz+ x & \to & x\label{ExWallisExtended_rule12}\\
    s(x) + y & \to & \Fs(x + y)\label{ExWallisExtended_rule13}\\
    \Fz \times y & \to & \Fz\label{ExWallisExtended_rule14}\\
     s(x) \times y & \to & y+(x\times y)\label{ExWallisExtended_rule15}\\
    \FprodFrac(\Ffrac(x,y),\Ffrac(z,t)) & \to & \Ffrac(x\times z,y\times t)\label{ExWallisExtended_rule16}\\
    \FprodOfFracs(\Fnil) & \to & \Ffrac(\Fs(\Fz),\Fs(\Fz))\label{ExWallisExtended_rule17}\\
    \FprodOfFracs(\FconsF(p,ps)) & \to & \FprodFrac(p,\FprodOfFracs(ps))\label{ExWallisExtended_rule18}\\
    \FhalfPi(n) & \to & \FprodOfFracs(\Ftake(n,\Fzip(\FrepItems(\Ftail(\FevenNs)),\Ftail(\FrepItems(\FoddNs)))))\label{ExWallisExtended_rule19}
 \end{eqnarray}}
  \caption{Computing Wallis' approximation to $\frac{\pi}{2}$}\label{FigExWallisExtended}
 \end{figure}
\noindent 
In $\cR$, symbols $\Fz$ and $\Fs$ implement Peano's representation of natural numbers;
we also have the usual arithmetic operations $\Fadd$ition and $\Fprod$uct.
Symbols $\Fcons$ and $\Fnil$ are \emph{list constructors} to build (possibly
infinite) lists of natural numbers like
$\FevenNs$ (the infinite list of even numbers) and $\FoddNs$ (the
infinite list of odd numbers), which are defined by mutual recursion with rules (\ref{ExWallisExtended_rule1})
and (\ref{ExWallisExtended_rule2}).
Function $\Fincr$ increases the elements of a list in one unit through the 
application of $\Fs$ (rule (\ref{ExWallisExtended_rule3})).
Function $\Fzip$ merges a pair of lists into a list of fractions (rules (\ref{ExWallisExtended_rule6}) to (\ref{ExWallisExtended_rule8})), and
$\Ftail$ returns the elements of a list after removing the first one (rule (\ref{ExWallisExtended_rule9})).
Function $\Ftake$ (defined by rules (\ref{ExWallisExtended_rule4}) and (\ref{ExWallisExtended_rule5})) 
is used to obtain the components of  
a finite approximation to 
$\frac{\pi}{2}$ which we multiply with $\FprodOfFracs$, which calls the usual addition and product of
natural numbers defined by rules (\ref{ExWallisExtended_rule12}) to (\ref{ExWallisExtended_rule15}).
The explicit use of 
$\FconsF$ to build \emph{finite} lists of
fractions of natural numbers by means of $\Ftake$  ensures that the product of their elements 
computed by $\FprodOfFracs$ is well-defined.
A call $\FhalfPi(\Fs^n(\Fz))$ for some $n>0$ returns the desired approximation whose computation
is launched by rule (\ref{ExWallisExtended_rule19}).

Note that $\cR$ is \emph{nonterminating}. For instance we have the following infinite rewrite sequence:
\begin{eqnarray}
\ul{\FevenNs} \to \Fcons(\Fz,\Fincr(\ul{\FoddNs})) \to \Fcons(\Fz,\Fincr(\Fincr(\ul{\FevenNs})))\to\cdots\to\cdots\label{InfSequence}
\end{eqnarray}
\end{example}
\emph{Context-sensitive rewriting} (\csr\
\cite{Lucas_CScompFuncFunLogProg_JFLP98,Lucas_CSRewStrat_IC02}) is 
a restriction of rewriting which 
imposes fixed, \emph{syntactic} restrictions on reductions by means of 
a {\em replacement map} $\mu$ that, for each $k$-ary symbol
$f$, discriminates the argument positions
$i\in\mu(f)\subseteq\{1,\ldots,k\}$ which  {\em can} be rewritten and 
forbids them if $i\not\in\mu(f)$. 
These restrictions are raised to arbitrary subterms of terms in the obvious way. 
With \csr\ we can achieve a \emph{terminating behaviour} for TRSs $\cR$ which (as 
in Example \ref{ExWallisExtended}) are not
terminating in the unrestricted case.

\begin{example}
Let the replacement map 
$\mu$ be given by:
\[\mu(\Fcons) =\emptyset \text{ and } \mu(f) = \{1,\ldots, ar(f)\} \text{ for all } f\in\Symbols-\{\Fcons\}\]
That is, $\mu$ disallows rewriting on  the
   arguments of the list constructor $\Fcons$ (due to $\mu(\Fcons)=\emptyset$). This makes
  a kind of \emph{lazy evaluation} of lists possible.
  For instance, the rewrite sequence (\ref{InfSequence}) above is \emph{not} possible with \csr. The \emph{second}
  step is disallowed because the replacement is issued on the \emph{second} argument of $\Fcons$ and $2\notin\mu(\Fcons)$, i.e.,
 \begin{eqnarray}
\Fcons(\Fz,\Fincr(\ul{\FoddNs})) \not\hto_\mu \Fcons(\Fz,\Fincr(\Fincr(\FevenNs)))\nonumber
\end{eqnarray}
where we write $\hto_\mu$ to emphasize that the rewriting step is issued using \csr\ under the replacement map $\mu$.
This makes the infinite sequence impossible.
Termination 
of \csr\ for the TRS $\cR$ and $\mu$ in Example \ref{ExWallisExtended} can be automatically
proved with the termination tool \muterm\ \cite{AlaGutLucNav_ProvingTerminationPropertiesWithMUTERM_AMAST10}.
\end{example}
A number of programming languages like 
\CafeOBJ, \cite{FutNak_AnOvCafeSpecEnv_ICFEM97}, 
 \OBJTwo, \cite{FutGogJouaMes_PrincOfOBJ2_POPL85}, 
\OBJThree, \cite{GogWinMesFutJou_IntrodOBJ_2000}, 
and \Maude\ \cite{ClavelEtAl_MaudeBook_2007} 
admit the {\em explicit} 
specification of replacement restrictions under the so-called {\em local strategies}, which are sequences of 
argument indices associated to each symbol in the program.

Restrictions of rewriting may turn normal forms of some terms \emph{unreachable}, leading to {\em incomplete}
computations.
Sufficient conditions ensuring that 
context-sensitive computations stop yielding head-normal forms,
values or even normal forms have been investigated in \cite{Lucas_CSCompConfProg_PLILP96,%
Lucas_NeedRedCSRR_ALP97,%
Lucas_TransfEfEvFP_PLILP97,%
Lucas_CScompFuncFunLogProg_JFLP98,%
Lucas_CSRewStrat_IC02}.

The notion of \emph{productivity} in term rewriting has to do with the ability of TRSs to compute 
possibly infinite \emph{values} rather than arbitrary normal forms (as discussed in 
\cite{DershKapPlais_RewRewRewRewRew_TCS91,%
KennaKlopSleepVries_TransRedOrtTRS_IC95}, for instance).
In \csr, early results showed that, for left-linear TRSs $\cR$, if the replacement map $\mu$ is
made \emph{compatible} with the left-hand sides $\ell$ of the rules $\ell\to r$ of $\cR$,
then \csr\ has two properties which are specifically relevant for the purpose of this paper: 
\begin{enumerate}
\item\label{ItemMuNormalFormsAreHeadNormalForms} 
every $\mu$-normal form (i.e., a term $t$ where no further
rewritings are allowed with \csr\ under $\mu$) is a \emph{head-normal form} (i.e., a term that does not
rewrite into a redex) \cite[Theorem 8]{Lucas_CScompFuncFunLogProg_JFLP98},
\item\label{ItemConstructorHNFsAreMuReachable}
every term that rewrites into a \emph{constructor head-normal form} can be rewritten \emph{with \csr} 
into a constructor head-normal form with the same head symbol \cite[Theorem 9]{Lucas_CScompFuncFunLogProg_JFLP98}.
\end{enumerate}
The aforementioned \emph{compatibility} of the replacement map $\mu$ with the left-hand sides of 
the rules  (which is then called a \emph{canonical} replacement map) 
just ensures that the positions of nonvariable symbols in $\ell$ are always \emph{reducible}
under $\mu$.
For instance, $\mu$ in Example 
\ref{ExWallisExtended}
is a canonical replacement map for $\cR$  in the example.
See also \cite{Lucas_CompletenessOfContextSensitiveRewriting_IPL15} where the role of the canonical replacement in connection with the algebraic semantics of 
computations with \csr, as defined in \cite{HenMes_CompletenessOfCSOSSpec_RTA07} and also \cite{NakOgaFut_ReducOpSymbolsAndBehavSpec_JSC10}, has been investigated.

In the following, we show that the facts (\ref{ItemMuNormalFormsAreHeadNormalForms}) 
and (\ref{ItemConstructorHNFsAreMuReachable}) \emph{suffice} to prove that termination of \csr\ 
is a \emph{sufficient} condition for productivity (see Theorem \ref{TheoCanonicalMuTermAndProductivityOfTreeSpec}
below).
As mentioned before, the connection between termination of \csr\ and productivity is not new.
In particular, Zantema and Raffelsieper proved that termination of \csr\ is a sufficient condition
for productivity \cite{ZanRaf_ProvingProdInInfDataStructures_RTA10}, 
and then Endrullis and Hendriks proved that, in fact, and provided that some appropriate 
transformations are used, it is also \emph{necessary}, i.e., termination of \csr\ \emph{characterizes} 
productivity \cite{EndHen_LazyProductivityViaTermination_TCS11}.
\begin{example}\label{Ex6_8_EH11}
The following TRS $\cR$ 
can be used to define \emph{ordinal numbers} \cite[Example 6.8]{EndHen_LazyProductivityViaTermination_TCS11}: 
\[\begin{array}{rcl@{\hspace{1cm}}rcl}
x+\Fz & \to & x & x\times \Fz & \to & \Fz\\
x+\FS(y) & \to & \FS(x+y) & x\times \FS(y) & \to & (x\times y)+x\\
x+\FL(\sigma)) & \to & \FL(x+_L\sigma) & x\times\FL(\sigma) & \to & \FL(x\times_L\sigma)\\
x+_L(y:\sigma) & \to & (x+y):(x+_L\sigma) & x\times_L(y:\sigma) & \to & (x\times y):(x\times_L\sigma)\\
\Fnats(x) & \to & x:\Fnats(\FS(x)) & \omega & \to & \FL(\Fnats(\Fz))
\end{array}
\]
Here, $\Fz$ and $\FS$ are the usual constructors for natural numbers in Peano's notation;
a stream of ordinals can be obtained by means of the list constructor `$:$'  that combines an ordinal
and a stream of ordinals to obtain a new stream of ordinals; 
finally, $\FL$ represents a \emph{limit ordinal} defined by means of a stream of ordinals. 
For instance $\omega$ is given as the limit $\FL(\Fnats(\Fz))$ of $\Fnats(\Fz)$, the stream that contains all natural numbers.
Finally, $+$ and $\times$ are intended to, respectively, \emph{add} and \emph{multiply} ordinal numbers; 
symbol $+_L$ is an auxiliary operator that adds an ordinal number $x$ to a stream or ordinals 
by adding $x$ to each component of the stream using $+$.
Operation $\times_L$ performs a similar task with $\times$.
Endrullis and Hendriks use a transformation which introduces the replacement map
$\mu(+)=\mu(+_L)=\mu(\times)=\mu(\times_L)=\{2\}$, 
$\mu(\FS)=\{1\}$, and 
$\mu(\FL)=\mu(:)=\mu(\Fnats)=\emptyset$,
and also adds some rules to prove $\cR$ \emph{productive}.
\end{example}
So, what is our contribution? First, we show that the ability of \csr\ to prove productivity 
is a consequence of essential properties of \csr, like (1) and (2) above.
This theoretical clarification is valuable and useful for further developments in the field and, as far
as we know, has not been addressed before.
From a practical point of view, we are able to improve Zantema and Raffelsieper's criterion 
that uses unnecessarily `permisive' replacement maps which can fail to conclude 
productivity as termination of \csr\ in many cases.
For instance, we can prove productivity of $\cR$ in Example \ref{Ex6_8_EH11} as termination of \csr\ for the
replacement map $\mu$ in the example. Furthermore, we can do it automatically by using existing tools like
\AProVE\ \cite{GieSchThi_AproveForDPFramework_IJCAR06} or \muterm.
In contrast, with the replacement map $\mu'$ that would be obtained according to \cite{ZanRaf_ProvingProdInInfDataStructures_RTA10}, 
$\cR$ is \emph{not} terminating for \csr;
thus, productivity cannot be proved by using Zantema and Raffelsieper's technique.
We are also able to improve the treatment in  \cite{EndHen_LazyProductivityViaTermination_TCS11} because
they need to apply a transformation to $\cR$ that we do not need to use.
In fact, we were able to deal with all examples of productivity in those papers by using our main result
together with the aforementioned termination tools to obtain automatic proofs.
Our result, though, does \emph{not} provide a characterization of productivity, as we show by means of an
example.

However, our results apply to \emph{left-linear} TRSs, whereas 
\cite{EndHen_LazyProductivityViaTermination_TCS11,ZanRaf_ProvingProdInInfDataStructures_RTA10}
deal with \emph{orthogonal} (constructor-based) TRSs only.
Actually, we also supersede the main result of \cite{Raffelsieper_ProductivityOfNonOrthogonalTRSs_EPTCS12}
which applies to non-orthogonal TRSs which are still left-linear.
This is also interesting to understand the role of \csr\ in proofs of productivity.
Actually, the results in the literature about completeness of \csr\ to obtain head-normal forms and values concern
left-linear TRSs and canonical replacement maps only.
The additional restrictions that are usually imposed on TRSs to achieve productivity as termination of \csr\ (in
particular, \emph{exhaustive} patterns in the left-hand sides) have to do with the notion of productivity rather than
with \csr\ itself.

After some preliminaries in Section \ref{SecPreliminaries}, Section \ref{SecCSR} introduces the notions
about \csr\ that we need for the development of our results on productivity via termination of \csr\ in 
Section \ref{SecInfNormalizationProductivity}.
Section \ref{SecRelatedWork} compares with related work and Section \ref{SecConclusions} concludes.

\section{Preliminaries}\label{SecPreliminaries}

This section collects a number of definitions and notations about term rewriting \cite{BaaNip_TermRewAllThat_1998,Terese_TermRewritingSystems_2003}.
Throughout the paper, $\Variables$ denotes a 
countable set of variables and $\Symbols$ denotes
a signature, i.e., a set of function symbols
$\{\pr{f}, \pr{g}, \ldots \}$, each having a fixed arity given by a 
mapping $ar:\Symbols\rightarrow \nat$. The set of
terms built from $\Symbols$ and $\Variables$ is $\Terms$.
Given a (set of) term(s) $t\in\Terms$ (resp.\ $T\subseteq\Terms$), we write
$\Symbols(t)$ (resp.\ $\Symbols(T)$) to denote the subset of symbols in $\Symbols$
occurring in $t$ (resp. $T$).
A term is said to be linear if 
it has no multiple occurrences of a single variable.
Terms are viewed as labelled trees in the usual way. 
Positions $p,q,\ldots$
are represented by chains of positive natural numbers used to address subterms
of $t$. Given positions $p,q$, we denote its concatenation as $p.q$. 
Positions are ordered by the standard prefix ordering $\leq$. Given a 
set of positions $P$, $minimal_\leq(P)$ is the set of minimal 
positions of $P$ w.r.t.\ $\leq$. 
If $p$ is a position, and $Q$ is a set of positions,
$p.Q=\{p.q~|~q\in Q\}$. We denote the empty chain by $\toppos$. 
The set of positions of a term $t$ is $\Pos(t)$. 
Positions of non-variable symbols in $t$ are denoted as $\Pos_\Symbols(t)$, and 
$\Pos_\Variables(t)$ are the positions of variables.
The subterm at position $p$ of $t$ is denoted as 
$t|_p$ and  $t[s]_p$ is the
term $t$ with the subterm at position $p$ replaced by $s$.
The symbol labelling the root of $t$ is denoted as $root(t)$. 
Given terms $t$ and $s$, $\Pos_s(t)$ denotes 
the set of positions of $s$ in $t$, i.e., $\Pos_s(t)=\{p\in 
\Pos(t)\mid t|_p=s\}$. 
A substitution is a mapping $\sigma:\Variables\to\Terms$ which is homomorphically
extended to a mapping $\sigma:\Terms\to\Terms$ which, by abuse, we denote using the same symbol
$\sigma$.

A rewrite rule is an ordered pair $(l,r)$, written $l\to
r$,  with $l,r\in\Terms$, $l\not\in \Variables$ and
$\Var(r)\subseteq \Var(l)$. The left-hand side ({\em lhs}) of
the rule is $l$ and $r$ is the right-hand side ({\em rhs}). 
A TRS is a pair $\cR=(\Symbols, R)$ where $R$ is a set of rewrite  rules.
$L(\cR)$ denotes the set of $lhs$'s of $\cR$. 
An instance $\sigma(l)$ of a $lhs$ $l$ of a rule is a redex.
The set of redex positions in $t$ is $\Pos_{\cal R}(t)$.
A TRS $\cR$ is left-linear if for all $l\in L(\cR)$, $l$ is a linear
term. 
 Given $\cR=(\Symbols,R)$, we consider $\Symbols$ as the
 disjoint union  $\Symbols=\CSymbols\uplus\DSymbols$ of
 symbols $c\in\CSymbols$, called {\em constructors} and
 symbols $f\in\DSymbols$, called {\em defined functions},
 where $\DSymbols=\{root(l)~|~l\to r\in R\}$
 and $\CSymbols=\Symbols-\DSymbols$. 
 Then, $\CTerms$ (resp. $\GCTerms$) is the set of constructor (resp.\ ground constructor) terms. 
 A TRS $\cR=(\CSymbols\uplus\DSymbols,R)$  is a \emph{constructor system} (CS) if 
 for all $f(\ell_1,\ldots,\ell_k) \to r\in R$, $\ell_i\in\CTerms$, for $1\leq 
 i\leq k$.
 
A term $t\in\Terms$ rewrites to $s$ (at position $p$), written 
$t\stackrel{p}{\to}_\cR s$ (or just $t\to s$), if $t|_p=\sigma(l)$ and 
$s=t[\sigma(r)]_p$, for some rule $\rho:l\rightarrow r\in R$, 
$p\in \Pos(t)$ and substitution $\sigma$. A TRS is terminating if 
$\to$ is terminating. 
 A term $s$ is root-stable (or a head-normal form) if $\forall t$, 
 if $s\to^*t$, then $t$ is not a redex. 
 A term is said to be head-normalizing if it rewrites into a head-normal form.

\section{Context-sensitive rewriting}\label{SecCSR}

A mapping $\mu:\Symbols\rightarrow\wp(\nat)$ is a {\em replacement map} ($\Symbols$-map) if for all
$f\in\Symbols,~\mu(f)\subseteq \{1,\ldots,ar(f)\}$ \cite{Lucas_FundamentalsCSRR_SOFSEM95,Lucas_CScompFuncFunLogProg_JFLP98}.
 $\Rmaps{\Symbols}$ is the set of $\Symbols$-maps. 
Replacement maps can be compared according to their `restriction power':
$\mu\sqsubseteq\mu' \mbox{ if for all } f\in\Symbols,~
\mu(f)\subseteq\mu'(f)$.
If $\mu\sqsubseteq\mu'$, we say that $\mu$ is \emph{more restrictive} than 
$\mu'$. 
Then,
$(\wp(\nat),\subseteq,\emptyset,\nat,\cup)$ induces a complete
lattice $(\Rmaps{\Symbols},\sqsubseteq,\mu_\bot,\mu_\top,\sqcup)$:
the minimum (maximum) element is $\mu_\bot$ ($\mu_\top$), given by
$\mu_\bot(f)=\emptyset$ ($\mu_\top(f)=\{1,\ldots,ar(f)\}$) for all 
$f\in\Symbols$. The {\em lub} $\sqcup$ is 
given by $(\mu\sqcup\mu')(f)=\mu(f)\cup\mu'(f)$ for all $f\in\Symbols$. 

The replacement restrictions introduced by a replacement map $\mu$ 
on the {\em arguments} of function {\em symbols} are raised to {\em positions} of 
{\em terms} $t\in\Terms$: the set $\Pos^\mu(t)$ of {\em $\mu$-replacing positions}  of $t$ is: 
\[\Pos^\mu(t)=\left \{
\begin{array}{ll}
\{\toppos\} & \mbox{if } t\in\Variables\\
\{\toppos\}\cup\:\bigcup_{i\in\mu(root(t))}i.\Pos^\mu(t|_i) & 
\mbox{if }t\not\in\Variables
\end{array}
\right .\]
Given terms $s,t\in\Terms$, $\Pos^\mu_s(t)$ is the set of  
 positions corresponding to $\mu$-replacing occurrences of $s$ in $t$: 
 $\Pos^\mu_s(t)=\Pos^\mu(t)\cap\Pos_s(t)$. 
 The set of  
 {\em $\mu$-replacing 
 variables} occurring in $t\in\Terms$ is 
$\Var^\mu(t)=\{x\in\Variables\mid \Pos^\mu_x(t)\neq\emptyset\}$.

\subsection{Canonical replacement map}

Given $t\in\Terms$, a replacement map $\mu\in\Rmaps{\Symbols}$,
is called \emph{compatible} with $t$ (and vice versa) if 
$\Pos_\Symbols(t)\subseteq\Pos^\mu(t)$. 
Furthermore, $\mu$ is called \emph{strongly compatible} with $t$ if $\Pos_\Symbols(t)=\Pos^\mu(t)$.
And $\mu$ is (strongly) compatible with $T\subseteq\Terms$ if
for all $t\in T$, $\mu$ is (strongly) compatible with $t$ \cite{Lucas_NeedRedCSRR_ALP97,Lucas_CScompFuncFunLogProg_JFLP98}.
The {\em 
minimum} replacement map which is compatible with 
$t\in\Terms$ is \cite{Lucas_CScompFuncFunLogProg_JFLP98}:
\[\mu_t = \left \{ \begin{array}{ll}
\mu_\bot & \mbox{if } t\in\Variables\\
\mu^\toppos_t\sqcup\mu_{t|_1}\sqcup\cdots\sqcup\mu_{t|_{ar(root(t))}} & 
\mbox{if } t\not\in\Variables
\end{array}
\right .\]
with  
$\mu^\toppos_t(root(t))=\{i\in\{1,\ldots,ar(root(t))\}~|~t|_i\not\in\Variables\}$
and $\mu^\toppos_t(f)=\emptyset$ if $f\neq root(t)$.

For a TRS $\cR=(\Symbols,R)$, we use $\Rmaps{\cR}$ instead of 
$\Rmaps{\Symbols}$.
The canonical replacement map $\muCan$ of $\cR$ is 
{\em the most 
restrictive replacement map ensuring that the non-variable subterms of 
the left-hand sides of the rules of $\cR$ are active}.

\begin{definition}{\rm \cite{Lucas_CScompFuncFunLogProg_JFLP98}}
Let $\cR$ be a TRS. The {\em canonical replacement map} of $\cR$ is 
$\muCan=\sqcup_{l\in L(\cR)}\mu_l$.
\end{definition}
Note that $\muCan$ can be automatically associated to $\cR$ by means 
of a very simple calculus:  for each symbol $f\in\Symbols$ and $i\in\{1,\ldots,ar(f)\}$,
$i\in\muCan(f)$  iff  $\exists l\in L(\cR), 
p\in\Pos_\Symbols(l), (root(l|_p)=f\wedge p.i\in\Pos_\Symbols(l)).$
Given a TRS $\cR$,  
$\CRmaps{\cR}=\{\mu\in\Rmaps{\cR}\mid \muCan\sqsubseteq\mu\}$ is the set
of replacement maps that are equal to or \emph{less restrictive} than the canonical 
replacement map. 
If  $\mu\in\CRmaps{\cR}$, we also say that $\mu$ is {\em a} canonical 
replacement map for $\cR$.

\begin{example}\label{ExCanonicalRepMap}
For $\cR$ in Example \ref{Ex6_8_EH11}, we have\footnote{The specification for constant symbols $a$ is omitted, as it is always the empty set $\mu(a)=\emptyset$.}:
\[\begin{array}{rcl@{\hspace{0.5cm}}rcl}
\muCan(\FS)=\muCan(\FL)=\muCan(\Fnats) =\muCan(:)  & =  &\emptyset \\ 
\muCan(+)=\muCan(+_L)=\muCan(\times)=\muCan(\times_L) & = & \{2\}
\end{array}\]
For instance, $\muCan(\FS)=\emptyset$ because for all subterms $\FS(t)$ in the left-hand sides $\ell$ of 
the rules $\ell\to r$ of $\cR$, $t$ is always a \emph{variable}.
However, $\muCan(+)=\{2\}$ because the second argument of $+$ in the 
left-hand side $x+\Fz$ of the first rule in $\cR$ is \emph{not} a variable.

Note that, $\mu$ in Example \ref{Ex6_8_EH11} prescribes $\mu(\FS)=\{1\}$.
Thus, $\muCan\sqsubset\mu$ and  $\mu\in\CRmaps{\cR}$ but $\mu\neq\muCan$.
\end{example}

\subsection{Strongly compatible TRSs}\label{SecStronglyCompatibleTRSs}

Given $t\in\Terms$, the only $\Symbols(t)$-map $\mu$  (if any) which is strongly compatible with $t$
is $\mu_t$ \cite[Proposition 3.6]{Lucas_NeedRedCSRR_ALP97}.
We call $t\in\Terms$ strongly compatible if $\mu_t$ is strongly compatible with $t$.
Similarly, the only $\Symbols(T)$-map $\mu$ which can be strongly compatible with $T$ is $\mu_T=\sqcup_{t\in T}\mu_t$.
We call $T$ \emph{strongly compatible} if $\mu_T$ is strongly compatible with $T$; we call $T$ \emph{weakly
compatible} if $t$ is strongly compatible for all $t\in T$.

\begin{definition}{\rm \cite{Lucas_NeedRedCSRR_ALP97,Lucas_TransfEfEvFP_PLILP97}}\label{DefiStrongRepIndTRS}
A TRS $\cR$ is \emph{strongly} ({\em weakly}) compatible, if $L({\cal R})$ is a strongly (weakly) 
compatible set of terms.
\end{definition}
The only replacement map (if any) which makes $\cR$ strongly compatible is $\muCan$.
For instance, $\cR$ in Example \ref{Ex6_8_EH11} is strongly compatible, but $\mu$ is \emph{not} strongly compatible
with $L(\cR)$ (variable $y$ in the left-hand side of the second rule is $\mu$-replacing).

\subsection{Context-sensitive rewriting}

Given a TRS $\cR=(\Symbols,R)$, $\mu\in\Rmaps{\cR}$, and 
$s,t\in\Terms$, $s$ $\mu$-rewrites to $t$ at position $p$, written $s\csrewpos{p}{\cR,\mu} t$  (or $s\csrew{\cR,\mu}t$, 
$s\csrew{\mu}t$, or even $s\csrew{}t$), if 
$s\stackrel{p}{\to}_\cR t$ and $p\in \Pos^\mu(s)$ \cite{Lucas_FundamentalsCSRR_SOFSEM95,Lucas_CScompFuncFunLogProg_JFLP98}.
A TRS $\cR$ is $\mu$-terminating if $\csrew{\mu}$ is 
terminating. 
Several tools can be used to prove
termination of \csr; for instance,
\AProVE\ and
\muterm, among others.

\begin{remark}
In the following, when considering a TRS $\cR$ together with a canonical replacement map 
$\mu\in\CRmaps{\cR}$, we  often say that $\csrew{\mu}$ performs 
\emph{canonical} context-sensitive rewriting steps \cite{Lucas_CSRewStrat_IC02}.
\end{remark}
The $\csrew{\mu}$-normal forms are 
called \emph{$\mu$-normal forms}, and $\NF^\mu_\cR$ is the set of $\mu$-normal forms for a 
given TRS $\cR$. 
As for unrestricted rewriting, $t\in\NF^\mu_\cR$ if and only if 
$\Pos^\mu_\cR(t)=\emptyset$ (i.e., $t$ contains no $\mu$-replacing redex). 
Rewriting with canonical replacement maps $\mu$ 
has important computational properties that we enumerate here and use below.

\begin{theorem}{\rm \cite[Theorem 8]{Lucas_CScompFuncFunLogProg_JFLP98}} % Theorem 8
\label{TeoFormasMuNormYHeadNormalForm}
Let $\cR$ be a left-linear TRS and $\mu\in\CRmaps{\cR}$. 
Every $\mu$-normal form is a head-normal form.
\end{theorem}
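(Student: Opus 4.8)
The plan is to reduce the whole statement to a single invariant: \emph{canonical context-sensitive reduction keeps us inside the class of $\mu$-normal forms and does not disturb their $\mu$-replacing skeleton}. Concretely, I would prove the following claim. Let $\cR$ be left-linear, $\mu\in\CRmaps{\cR}$, and let $t$ be a $\mu$-normal form; if $t\to t'$ (an ordinary one-step reduction), then $t'$ is again a $\mu$-normal form, $\Pos^\mu(t)=\Pos^\mu(t')$, and $t,t'$ carry the same symbol at every position of $\Pos^\mu(t)$. Granting the claim, the theorem follows immediately: given a $\mu$-normal form $t$ and any derivation $t\to^* u$, a routine induction on its length shows $u$ is a $\mu$-normal form. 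Since $\toppos\in\Pos^\mu(u)$ always holds, a $\mu$-normal form can never be an instance of a left-hand side at its root (such a redex would be $\mu$-replacing); hence no reduct $u$ of $t$ is a redex, which is exactly the definition of $t$ being root-stable, i.e., a head-normal form.

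For the single step I would first observe that the rewritten position $p$ must be \emph{frozen}, i.e.\ $p\notin\Pos^\mu(t)$: otherwise $t$ would contain a $\mu$-replacing redex, contradicting that it is a $\mu$-normal form (in particular $p\neq\toppos$). Next I would use that $\Pos^\mu(t)$ is prefix-closed, so every extension $p.q$ is frozen as well. Since a step at $p$ alters only the subterm at $p$ and below, all symbols at positions of $\Pos^\mu(t)$ — which lie strictly above $p$ or are incomparable with $p$ — are left untouched. Because $\Pos^\mu(\cdot)$ is computed by a top-down recursion that only inspects the symbols sitting at $\mu$-replacing positions, this agreement forces $\Pos^\mu(t)=\Pos^\mu(t')$ together with symbol-by-symbol agreement on that common set of positions.

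The crux, and the step I expect to be the main obstacle, is showing that $t'$ has \emph{no} new $\mu$-replacing redex. I would argue by contradiction: suppose $t'|_q=\tau(\ell)$ is a redex for some rule $\ell\to r$ at a position $q\in\Pos^\mu(t')=\Pos^\mu(t)$. Canonicity gives $\Pos_\Symbols(\ell)\subseteq\Pos^\mu(\ell)$, so every non-variable position $s$ of $\ell$ satisfies $s\in\Pos^\mu(\tau(\ell))=\Pos^\mu(t'|_q)$, whence $q.s\in\Pos^\mu(t')=\Pos^\mu(t)$ by the composition law. By the agreement just established, $t$ and $t'$ carry the same symbol at each such $q.s$; hence $t|_q$ matches the pattern of $\ell$ at all of its non-variable positions. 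Here \emph{left-linearity} is essential: since $\ell$ is linear, agreement on the non-variable positions alone suffices to build a matching substitution (reading off the value of each variable at its unique occurrence), so $t|_q$ is itself an instance of $\ell$, a redex at the $\mu$-replacing position $q$. This contradicts $t$ being a $\mu$-normal form, so the putative redex in $t'$ cannot exist.

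Two auxiliary facts used above would be recorded separately for finite terms (the setting of this section): that $\Pos^\mu(\cdot)$ is prefix-closed and satisfies the composition law $q.s\in\Pos^\mu(t)\iff q\in\Pos^\mu(t)\wedge s\in\Pos^\mu(t|_q)$, and that for a non-variable position $s$ of $\ell$ one has $s\in\Pos^\mu(\ell)$ iff $s\in\Pos^\mu(\sigma(\ell))$ (substitution only enlarges replacing positions at the variable leaves of the pattern). Both follow directly from the recursive definition of $\Pos^\mu$, and with them the induction on derivation length closes the argument. I expect the only genuinely delicate point to be the interaction in the crux step between ``frozen positions are never disturbed'' and ``canonical maps make every pattern position replacing''; everything else is bookkeeping on positions.
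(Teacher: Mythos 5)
Your proof is correct: the paper itself states this theorem only by citation to \cite[Theorem 8]{Lucas_CScompFuncFunLogProg_JFLP98} and gives no proof, and your argument is a faithful reconstruction of the standard one from that source --- the one-step invariant that reduction in a $\mu$-normal form happens only at frozen positions and preserves the $\mu$-replacing skeleton, with canonicity ($\Pos_\Symbols(\ell)\subseteq\Pos^\mu(\ell)$) plus left-linearity combining in exactly the way you identify to rule out created $\mu$-replacing redexes. All the delicate points (prefix-closure, the composition law, stability of $\Pos^\mu$ under substitution at non-variable pattern positions, and reading off the matching substitution from the unique variable occurrences) are handled correctly, so there is nothing to fix.
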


\begin{theorem}{\rm \cite[Theorem 9]{Lucas_CScompFuncFunLogProg_JFLP98}}\label{TheoReescConstructorHeadNormalForm}
Let $\cR=(\Symbols,R)=(\CSymbols\uplus\DSymbols,R)$ be a left-linear 
TRS and $\mu\in\CRmaps{\cR}$. 
Let $s\in\Terms$, and $t=c(t_1,\ldots,t_k)$ for some $c\in\CSymbols$. 
If $s\to^* t$, then there is $u=c(u_1,\ldots,u_k)$
such that $s\hto^*u$ and, for all $i$, $1\leq i\leq k$, $u_i\to^* t_i$.
\end{theorem}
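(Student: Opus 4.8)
The plan is to argue by a single induction that replays the given reduction $s\to^* t$ from the root downwards, step by step, using \csr. The pivotal observation is that the root position $\toppos$ is always $\mu$-replacing, so any rewrite performed at the top of a term is a legal \csr\ step; the only steps \csr\ can forbid are those taking place strictly inside a frozen argument. Because $\mu$ is canonical, a frozen argument of a left-hand side must carry a variable: if $\ell_i$ at a frozen position $i\notin\mu(root(\ell))$ were non-variable, then $i\in\muCan(root(\ell))\subseteq\mu(root(\ell))$ by $\mu\in\CRmaps{\cR}$, a contradiction. This is the fact that lets us reconstruct redexes using replacing positions only.

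First I would dispose of the case $root(s)=c$. If $s=c(s_1,\ldots,s_k)$ then, as $c\in\CSymbols$ is never the root of a redex, every step of $s\to^* t$ occurs strictly below the root, so $s_i\to^* t_i$ for each $i$ and $u=s$ (reached by the empty \csr\ reduction) works. Equivalently, this is the situation forced by \rTh{TeoFormasMuNormYHeadNormalForm}: once $s$ is a $\mu$-normal form it is root-stable, and root-stability together with $s\to^* c(\ldots)$ gives $root(s)=c$. So I may assume $root(s)\neq c$; since constructors are never rewritten at the root, $root(s)$ cannot be another constructor, and $s$ is not a variable, hence $root(s)=f\in\DSymbols$.

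The heart of the proof is a \emph{matching lemma}: if $\ell$ is a linear term, $\mu\in\CRmaps{\cR}$, and $a\to^*\sigma(\ell)$, then there are $\sigma'$ and a \csr\ reduction $a\hto^*\sigma'(\ell)$ with $\sigma'(x)\to^*\sigma(x)$ for all $x\in\Var(\ell)$. For $\ell=x$ take $\sigma'(x)=a$. For $\ell=g(\ell_1,\ldots,\ell_m)$ one exposes $g$ at the root of $a$ using top \csr\ steps (itself an instance of the statement being proved, handled by the same induction), then treats each argument: frozen arguments carry variables by the remark above and are recorded into $\sigma'$ without reduction, while for replacing arguments $i\in\mu(g)$ a recursive application lifts to \csr\ steps in $s$. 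Here \emph{left-linearity} is essential: it guarantees that the steps of $a\to^*\sigma(\ell)$ split disjointly into steps at non-variable positions of $\ell$, which become the matching \csr\ steps, and steps inside variable-bound subterms, which become the $\sigma'(x)\to^*\sigma(x)$; nothing is counted twice and no two variable occurrences must be unified. With the lemma available, I locate the first root step $s\to^*\sigma(\ell)\stackrel{\toppos}{\to}\sigma(r)\to^* t$ with $\ell\to r\in R$ and $root(\ell)=f$; the lemma gives $s\hto^*\sigma'(\ell)$, I fire the rule $\sigma'(\ell)\hto\sigma'(r)$ (legal since $\toppos$ is replacing), and then $\sigma'(r)\to^*\sigma(r)\to^* t$. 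If $root(r)=c$ we are done, with the arguments of $\sigma'(r)$ reducing to the $t_i$; otherwise I apply the induction hypothesis to $\sigma'(r)\to^* t$.

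The main obstacle is choosing a measure that strictly decreases. Plain reduction length fails, because a \emph{duplicating} right-hand side copies the reductions $\sigma'(x)\to^*\sigma(x)$ into several occurrences and can increase the length; and the number of root steps fails, because a \emph{collapsing} rule $r=x$ lets reductions that were below the root surface to the top and can increase the root-step count. These two effects are orthogonal, so neither quantity works alone and no simple lexicographic combination succeeds. The clean resolution is to measure the reduction by its number of \emph{spine steps} --- those steps that, tracing descendants backwards from $root(t)=c$, are ancestors of the eventually exposed root. This quantity is finite, is insensitive to duplication (the copies lie off the spine), and correctly counts collapses (which lie on the spine), and each round of the construction strictly decreases it. Left-linearity again guarantees that descendants, hence the spine, are well defined. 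Setting up this descendant/spine bookkeeping is the technical crux; the remaining arguments are routine.
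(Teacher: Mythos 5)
There is nothing in this paper to compare your attempt against: the statement is imported verbatim from \cite[Theorem 9]{Lucas_CScompFuncFunLogProg_JFLP98} and the paper gives no proof of it, using it only as a black box in the proof of Theorem \ref{TheoCanonicalMuTermAndProductivityOfTreeSpec}. Judged on its own merits and against the cited source's development, your outline assembles exactly the right ingredients, and they are the classical ones: the root position $\toppos$ is always in $\Pos^\mu(s)$, so top steps are free; canonicity ($\muCan\sqsubseteq\mu$) forces every frozen argument of a left-hand side to be a variable, which is the correct and crucial observation; and left-linearity lets a reduction $a\to^*\sigma(\ell)$ be split disjointly into steps at non-variable positions of $\ell$ and steps inside substitution parts, yielding your matching lemma $a\hto^*\sigma'(\ell)$ with $\sigma'(x)\to^*\sigma(x)$. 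Locating the first root step and recursing on $\sigma'(r)\to^*\sigma(r)\to^* t$ is the standard skeleton for results of this kind.

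Two caveats. First, a mild imprecision: the statement you must actually induct on is the generalization to an \emph{arbitrary} root symbol $g\in\Symbols$, not only $c\in\CSymbols$, since inside the matching lemma you expose the roots of proper subterms of left-hand sides, and in a general left-linear TRS (the theorem is not restricted to constructor systems) these may be defined symbols; your parenthetical ``an instance of the statement being proved'' is really an instance of this stronger statement, which is proved the same way but should be stated as the induction target. Second, and more substantively, the well-foundedness of the recursion is the crux and remains a sketch in your proposal. Your diagnosis is exactly right --- plain length fails because the copies of $\sigma'(x)\to^*\sigma(x)$ under a duplicating $r$ inflate it, and root-step count fails because a collapsing rule $r=x$ surfaces substitution steps to the root --- and the spine/descendant measure you propose is a workable repair in the spirit of neededness/tracing arguments. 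But turning it into a proof requires real work that the proposal does not do: defining descendants across left-linear steps, verifying that the fired root step lies on the spine, that at most one copy of each duplicated substitution step lands on the spine, that collapsing steps preserve spine membership, and that every recursive call (including those spawned inside the matching lemma, which trace to the roots of subpatterns rather than to $root(t)$) strictly decreases the combined measure. So: right architecture, right key lemma, with the termination argument honestly flagged but undischarged at precisely the point you yourself identify as the technical crux.
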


\section{Productivity and termination of \csr}\label{SecInfNormalizationProductivity}

The operational semantics of rewriting-based programming languages can be abstracted, 
for each program (i.e., TRS) $\cR$, as a mapping from terms $s\in\Terms$ into (possibly empty) 
sets of (possibly infinite) terms $T_s\subseteq\ITerms$, which are (possibly infinite) \emph{reducts} of $s$.
The \emph{intended shape} of terms in $T_s$ depends on the application:

\begin{enumerate}
\item In \emph{functional programming}, (ground) \emph{values} $t\in\GCTerms$ are the meaningful
reducts of  (ground) 
initial expressions $s$ (\emph{evaluation} semantics) and $T_s\subseteq\GCTerms$.
\item In \emph{lazy} functional programming \emph{infinite values} are also accepted in the 
semantic description, i.e., $T_s\subseteq\IGCTerms$, but the infinite terms are not actually
obtained but only \emph{approximated} as sequences of appropriate finite terms which are \emph{prefixes}
of the infinite values\footnote{Such finite approximations to infinite terms are described as \emph{partial values}
using a special symbol $\bot$ to denote undefinedness. An infinite value $\delta\in\IGCTerms$ is the limit of an infinite sequence
$\delta_1,\ldots,\delta_n,\ldots$ of such partial values where, for all $i\geq 1$,
$\delta_{i+1}\in\GTermsOn{\CSymbols\cup\{\bot\}}$ is obtained from $\delta_i\in\GTermsOn{\CSymbols\cup\{\bot\}}$ 
by replacing occurrences of $\bot$ in $\delta_i$ by partial values 
different from $\bot$.}.
\item In \emph{equational programming} and rewriting-based theorem provers, computing \emph{normal forms} 
is envisaged (\emph{normalization} semantics), i.e., $T_s\subseteq\NF_\cR$.
\end{enumerate}
 In functional programming (both in the \emph{eager} and \emph{lazy} case), 
 computations can be understood as decomposed into the computation of 
 a head-normal form $t'$ (i.e., $s\to^*t'$) which is then rewritten (below the root!) into $t$.
When a head-normal form $t'$ is obtained, the root symbol $f=root(t')$ is checked. If $f$ is a constructor symbol, then the
evaluation continues on an argument of $t'$. Otherwise, the evaluation \emph{fails} and an error is reported (this
corresponds to $T_s$ empty). Thus, a 
head-normalization process is involved in the computation of the semantic sets $T_s$.

The notion of \emph{productivity} in term rewriting has to do with the ability of TRSs to compute 
possibly infinite \emph{values}.
Most presentations of productivity analysis use sorted signatures and terms
\cite{EndHen_LazyProductivityViaTermination_TCS11,ZanRaf_ProvingProdInInfDataStructures_RTA10}.
The set of sorts $\cS$ is partitioned into $\cS=\Delta\cup\Gamma$, where $\Delta$ is the set of
data sorts, intended to model inductive data types (booleans, natural numbers, finite lists, etc.).
On the other hand, $\Gamma$ is the set of \emph{codata} sorts, intended to model coinductive
datatypes such as streams and infinite trees.
Terms of sort $\Delta$ are called \emph{data terms} and terms of sorts $\Gamma$ are called
\emph{codata} terms.
Given a symbol $f:\tau_1\times\cdots\times\tau_n\to\tau$, $ar_\Delta(f)$ (resp.\ $ar_\Gamma(f)$) 
is the number of arguments of $f$ of sort $\Delta$ (resp.\ $\Gamma$).
Endrullis et al.\ (and also \cite{ZanRaf_ProvingProdInInfDataStructures_RTA10}) assume all
data arguments to be in the first argument positions of the symbols.
\begin{definition}{\rm  \cite[Definition 3.1]{EndHen_LazyProductivityViaTermination_TCS11}}
A \emph{tree specification} is a $(\Delta\cup\Gamma)$-sorted, orthogonal, exhaustive constructor
TRS $\cR$ where $\Delta\cap\Gamma=\emptyset$.
\end{definition}
Here, $\cR$ is called \emph{exhaustive} if for all $f\in\Symbols$, every term $f(t_1,\ldots,t_k)$ is a redex whenever 
$t_i\in\IGCTerms$ are (possiby infinite) closed
constructor terms  for all $i$, $1\leq i\leq k$ \cite[Definition 2.9]{EndHen_LazyProductivityViaTermination_TCS11}.
As in \cite[Definition 2.4]{EndHen_LazyProductivityViaTermination_TCS11}, we assume here
a generalized notion of substitution as an $\cS$-sorted mapping $\sigma:\Variables\to\ITerms$ which is also
extended to a mapping $\sigma:\ITerms\to\ITerms$.

\begin{example}\label{Ex6_8_EH11_treeSpecification}
Consider the tree specification $\cR$ in Example \ref{Ex6_8_EH11},
where, according to \cite[Example 6.8]{EndHen_LazyProductivityViaTermination_TCS11},
$\Delta=\{\mt{Ord}\}$ with $\mt{Ord}$ a data sort for ordinals and
$\Gamma=\{\mt{Str}\}$ with $\mt{Str}$ a codata sort for streams of ordinals.
The types for the constructor symbols are: $\Fz::\mt{Ord}$, $\FS::\mt{Ord}\to\mt{Ord}$, $\FL::\mt{Str}\to\mt{Ord}$
and $(:)::\mt{Ord}\times\mt{Str}\to\mt{Str}$.
Thus, 
$\CSymbols_\Delta=\{\Fz,\FS,\FL\}$, 
$\CSymbols_\Gamma=\{:\}$, 
$\DSymbols_\Delta=\{+,\times,\omega\}$, and
$\DSymbols_\Gamma=\{+_L,\times_L,\Fnats\}$. 
\end{example}

\begin{definition}{\rm \cite[Definition 3.5]{EndHen_LazyProductivityViaTermination_TCS11}}
A tree specification $\cR$ is \emph{constructor normalizing} if all finite ground terms $t\in\GTerms$
rewrite to a possibly infinite constructor normal form $\delta\in\IGCTerms$.
\end{definition}
Being exhaustive is a necessary condition for productivity.

\begin{theorem}
If $\cR$ is constructor normalizing, then it is exhaustive.
\end{theorem}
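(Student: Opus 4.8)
The goal is to show that if a tree specification $\cR$ is constructor normalizing, then it is exhaustive. I would prove this by contraposition: assuming $\cR$ is \emph{not} exhaustive, I would construct a finite ground term that fails to rewrite to an infinite constructor normal form, thereby contradicting constructor normalization. By the definition of exhaustiveness, non-exhaustiveness means there is a defined symbol $f\in\Symbols$ and (possibly infinite) closed constructor terms $t_1,\ldots,t_k\in\IGCTerms$ such that $f(t_1,\ldots,t_k)$ is \emph{not} a redex. The plan is to turn this infinitary witness into a finite ground term whose normal form cannot be a constructor term.

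**The key steps.**

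First I would use the fact that $\cR$ is a constructor TRS: the left-hand sides of rules for $f$ have the form $f(\ell_1,\ldots,\ell_k)$ with each $\ell_i\in\CTerms$. Non-exhaustiveness says that the constructor arguments $t_1,\ldots,t_k$ avoid matching \emph{every} such left-hand side. Since $\cR$ is exhaustive-testing only against constructor patterns, the crucial observation is that whether $f(t_1,\ldots,t_k)$ is a redex depends only on a finite prefix of each $t_i$ (the pattern depth of the rules for $f$). Thus I would first replace each infinite $t_i$ by a finite ground constructor term $s_i$ that agrees with $t_i$ up to the maximum pattern depth of the rules defining $f$, so that $f(s_1,\ldots,s_k)$ is still not a root-redex and, by the constructor structure, $f(s_1,\ldots,s_k)$ remains a $\mu$-normal form / non-redex under root rewriting. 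Second, I would argue that since each $s_i$ is a \emph{ground constructor} term it is already a normal form, so no rewriting can occur inside the arguments either; hence $f(s_1,\ldots,s_k)$ is itself a normal form whose root is a \emph{defined} symbol. Third, a normal form rooted by a defined symbol cannot be a constructor term, and since it is already a normal form it cannot rewrite to \emph{any} constructor term (finite or infinite), contradicting the assumption that every finite ground term rewrites to some $\delta\in\IGCTerms$.

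**The main obstacle.**

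The delicate point is the passage from the infinite witnesses $t_i\in\IGCTerms$ supplied by the negation of exhaustiveness to a \emph{finite} ground term to which constructor normalization can be applied, since constructor normalization quantifies only over finite ground terms $t\in\GTerms$. The argument that a finite prefix suffices relies on the pattern-matching for the finitely many rules of $f$ being decided by a bounded-depth inspection of the arguments; I would make this precise by taking the depth bound to be the maximal depth of nonvariable positions among the left-hand sides $f(\ell_1,\ldots,\ell_k)$, and noting that each $t_i$, being a closed constructor term, has a ground constructor prefix of that depth that can be completed to a finite ground constructor term $s_i$ (using that $\CSymbols$ contains at least one constructor of each relevant sort, which is guaranteed because the $t_i$ themselves are built from constructors). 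The remaining care is to confirm that truncating-and-completing the $t_i$ does not accidentally create a redex match; this follows because matching a linear constructor pattern of bounded depth is preserved when the arguments agree up to that depth, and $\cR$ being orthogonal (hence left-linear) ensures the patterns are linear so that the finite prefix determines matchability unambiguously.
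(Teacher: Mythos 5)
Your overall route is the same as the paper's: argue by contraposition, turning a failure of exhaustiveness into a finite ground term that refutes constructor normalization (the paper compresses this into two sentences, asserting directly that non-exhaustiveness yields a finite ground normal form containing a defined symbol). The genuine gap in your version is the completion step. You justify completing the depth-$d$ prefix of each possibly infinite witness $t_i$ to a \emph{finite} ground constructor term $s_i$ by observing that $\CSymbols$ contains a constructor of each relevant sort; but having a constructor of a sort does not yield a finite ground constructor term of that sort. For codata sorts this typically fails outright: in Example~\ref{Ex4_6_ZR10} of the paper the only stream constructor is the binary $(:)$, whose second argument is again a stream, so \emph{every} ground constructor term of stream sort is infinite and no finite $s_i\in\GCTerms$ exists. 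Since streams are the central objects of productivity analysis, this is not a corner case. The same problem hits your second step: for $f(s_1,\ldots,s_k)$ to be a normal form the fillers must be normal forms, and a codata sort may have no finite ground normal forms at all (indeed, under constructor normalization it provably has none, since a finite normal form is its own unique reduct and cannot equal an infinite constructor term).

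The argument can be repaired while keeping your truncation idea, which is sound as far as it goes: for left-linear constructor patterns, matchability at the root is determined by the symbols at the finitely many nonvariable positions of the left-hand sides, all of depth at most $d$, so agreement up to depth $d$ preserves non-redexhood. Complete the depth-$d$ constructor prefixes of the $t_i$ with \emph{arbitrary} finite ground terms of the appropriate sorts (these exist in the sorted setting at hand, each sort being inhabited by finite ground terms, e.g.\ $\Fp$ and $\Falt$ in Example~\ref{Ex4_6_ZR10}), and drop the claim that $f(s_1,\ldots,s_k)$ is a normal form. Instead, note that redexes are rooted by defined symbols, so no step can contract a constructor-rooted subterm at its root; hence the constructor prefix of $f(s_1,\ldots,s_k)$ down to depth $d$ persists along every reduction, the root never becomes a redex, and every reduct (even in the limit) remains rooted by the defined symbol $f$. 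Thus $f(s_1,\ldots,s_k)$ is a finite ground term with no possibly infinite constructor normal form, contradicting constructor normalization directly. This weaker conclusion suffices, and it is worth noting that the paper's own one-line intermediate claim --- a finite ground \emph{normal form} containing a defined symbol --- is stronger than what non-exhaustiveness actually provides, for exactly the reason identified above.
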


\begin{proof}
If not, then there is a finite ground normal form $t$ containing a defined symbol. This contradicts $\cR$
being constructor normalizing.
\end{proof}
\begin{theorem}\label{TheoCanonicalMuTermAndConstructorNormalizationOfTreeSpec}
Let $\cR$ be an exhaustive, left-linear TRS and $\mu\in\CRmaps{\cR}$. 
If $\cR$ is $\mu$-terminating, then $\cR$ is constructor normalizing.
\end{theorem}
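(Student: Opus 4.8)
The plan is to build the required (possibly infinite) constructor normal form $\delta\in\IGCTerms$ of a finite ground term $t\in\GTerms$ layer by layer, driving the construction with $\mu$-termination and keeping it honest with the computational properties of canonical \csr\ recalled above. Since $\cR$ is $\mu$-terminating, every finite ground term has a $\csrew{\mu}$-normal form, and since $\csrew{\mu}\subseteq\to$ such a normal form is an ordinary reduct; so I would first reduce $t$ to some $\hat{t}\in\NF^\mu_\cR$, which by Theorem~\ref{TeoFormasMuNormYHeadNormalForm} is a head-normal form. The crux of the whole argument is then a single \emph{key lemma}: if $t$ is a finite ground term and $\hat{t}$ is a $\mu$-normal form of $t$, then $root(\hat{t})\in\CSymbols$.

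Granting the key lemma, I would write $\hat{t}=c(s_1,\ldots,s_m)$ with $c\in\CSymbols$; each $s_j$ is again a finite ground term, so I recurse, corecursively setting $\delta=c(\delta_1,\ldots,\delta_m)$ where $\delta_j$ is the constructor normal form produced for $s_j$. Because each call fixes at least the constructor symbol $c$ at its root \emph{before} recursing, the process is productive: the depth at which new constructor symbols are committed tends to infinity, so the reduction $t\to^*\hat{t}$ followed by the reductions performed inside the arguments is strongly convergent, and its limit $\delta\in\IGCTerms$ is well defined. Since $\delta$ contains only constructor symbols and every redex has a defined root, $\delta$ has no redex at any position; hence it is a (possibly infinite) constructor normal form to which $t$ reduces, which is exactly constructor normalization. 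Here Theorem~\ref{TheoReescConstructorHeadNormalForm} is available to guarantee that whenever an argument $\to$-reduces to a constructor-rooted term it also $\csrew{\mu}$-reduces to one with the same root, so that each layer may legitimately be computed inside terminating \csr.

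It remains to prove the key lemma, which I expect to be the main obstacle. I would argue by contradiction: suppose $root(\hat{t})=f\in\DSymbols$, say $\hat{t}=f(s_1,\ldots,s_k)$. The decisive point is that $\mu$ is canonical, so the nonvariable positions of every left-hand side are $\muCan$-replacing, hence $\mu$-replacing; consequently matching any rule against a term only inspects $\mu$-replacing positions. Now every subterm of $\hat{t}$ at a $\mu$-replacing position is itself a $\mu$-normal form (a replacing subterm of a $\mu$-normal form cannot contain a replacing redex) and a proper subterm of the finite term $\hat{t}$, so by structural induction on $\hat{t}$ its root is a constructor. Thus $\hat{t}$ carries constructor symbols at all $\mu$-replacing positions down to the bounded depth of the left-hand sides. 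Completing each $s_i$ to a closed constructor term $\delta_i\in\IGCTerms$ that agrees with $s_i$ on its $\mu$-replacing positions — padding the boundary with closed constructor terms of the appropriate sort — exhaustiveness makes $f(\delta_1,\ldots,\delta_k)$ a redex; since the matched pattern constrains only $\mu$-replacing positions, $\hat{t}=f(s_1,\ldots,s_k)$ is a redex as well. But the root $\toppos$ is always $\mu$-replacing, so $\hat{t}$ would contain a $\mu$-replacing redex, contradicting $\hat{t}\in\NF^\mu_\cR$.

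The two points where I would concentrate the technical work are: (i) the bookkeeping showing that a rule matching the constructor completions $\delta_i$ also matches $\hat{t}$ itself, which rests on the fact that along the path to any nonvariable pattern position the symbols of $\delta_i$ and $\hat{t}$ coincide and are $\mu$-replacing, so canonicity propagates the match back to $\hat{t}$; and (ii) the productivity argument for the corecursion, i.e.\ verifying that committing one constructor per layer yields a genuinely strongly convergent reduction to an element of $\IGCTerms$ rather than merely an $\omega$-indexed family of finite approximants. Both hinge on the interplay between canonicity (confining inspection to replacing positions) and $\mu$-termination (guaranteeing each head-normalization terminates), which is precisely the mechanism the theorem is meant to isolate.
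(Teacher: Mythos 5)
Your proposal is correct in outline and shares the paper's skeleton---$\mu$-termination yields a finite $\mu$-normal form, its root must be a constructor, and the (possibly infinite) constructor normal form is assembled layer by layer---but your justification of the central step is genuinely different. The paper has no separate key lemma: within a single induction it applies the induction hypothesis to the arguments $t_i$ of the $\mu$-normal form $t=f(t_1,\ldots,t_k)$ to obtain constructor normal forms $\delta_i\in\IGCTerms$, and in the case $f\in\DSymbols$ it observes that $f(\delta_1,\ldots,\delta_k)$ would then be a ground (possibly infinite) \emph{normal form} reachable from the head-normal form $t$ (this is where Theorem~\ref{TeoFormasMuNormYHeadNormalForm} does its work), contradicting exhaustiveness. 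You instead contradict $\mu$-normality directly: exhaustiveness applied to \emph{artificial} constructor completions of the arguments produces a redex, and canonicity (nonvariable pattern positions are $\mu$-replacing) transfers the match back to $\hat{t}$, exhibiting a $\mu$-replacing redex at the root of $\hat{t}\in\NF^\mu_\cR$. Once your key lemma is in place you no longer really need Theorem~\ref{TeoFormasMuNormYHeadNormalForm}, and Theorem~\ref{TheoReescConstructorHeadNormalForm} is never needed at all---the paper does not use it for this theorem either (it enters only in the later productivity result), so you can drop that citation. Your explicit convergence argument in point~(ii) is in fact \emph{more} careful than the paper's: the paper's ``induction on $t$'' is informal precisely because the arguments of a $\mu$-normal form need not themselves be $\mu$-normal, and their $\mu$-normal forms are not structurally smaller; your one-constructor-committed-per-layer argument is the right way to make the corecursion legitimate.

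Two points in your key lemma need shoring up. First, the inference ``a rule matching $f(\delta_1,\ldots,\delta_k)$ also matches $\hat{t}$'' is valid only because matching a \emph{linear} pattern amounts to symbol agreement at its nonvariable positions, which canonicity confines to $\mu$-replacing positions; for a non-left-linear rule the padded completions could satisfy equality constraints between repeated variables (sitting at non-replacing positions) that $\hat{t}$ violates. Left-linearity is among the hypotheses, but your write-up never invokes it---make that dependence explicit. Second, your padding presupposes that closed (possibly infinite) constructor terms of the appropriate sorts exist; in degenerate signatures where the relevant fragment of $\IGCTerms$ is empty the completion cannot be formed, even though the theorem still holds there (the paper's proof then already bottoms out in a contradiction at the constant case). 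The paper's argument sidesteps this entirely by using as ``completion'' the constructor normal forms $\delta_i$ that the recursion itself supplies for the \emph{actual} arguments, so they exist and are well-sorted by construction; you could adopt the same device inside your key lemma and keep everything else unchanged.
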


\begin{proof}
Since $\cR$ is $\mu$-terminating, every ground term $s$ has a (finite) $\mu$-normal form $t$. 
By Theorem \ref{TeoFormasMuNormYHeadNormalForm}, $t$ is a head-normal form.
We prove by induction on $t$ that $t$ rewrites into a (possibly infinite) constructor term $\delta\in\IGCTerms$.
If $t$ is a constant, then since $t$ is a $\mu$-normal form, it must be a normal form.
Since $\cR$ is exhaustive, $t=\delta\in\GCTerms$.
If $t=f(t_1,\ldots,t_k)$ for ground terms $t_1,\ldots,t_k$, 
then by the induction hypothesis, for all $i$, $1\leq i\leq k$, 
$t_i$ has a (possibly infinite) constructor normal form $\delta_i\in\IGCTerms$. 
We have two cases:
\begin{enumerate}
\item If $f\in\CSymbols$, then $t$ has a (possibly infinite) constructor normal form $f(\delta_1,\ldots,\delta_k)$.
\item If $f\notin\CSymbols$, then, since $t$ is a head-normal form, 
$f(\delta_1,\ldots,\delta_k)$ is a ground 
(possibly infinite) normal form which contradicts that $\cR$ is exhaustive.
\end{enumerate}
Thus, $s$ has a (possibly infinite) constructor normal form as well and $\cR$ is constructor normalizing.
\end{proof}
Since tree specifications are left-linear and exhaustive, Theorem \ref{TheoCanonicalMuTermAndConstructorNormalizationOfTreeSpec} holds for tree specifications.

\begin{example}\label{Ex4_6_ZR10}
The following tree specification $\cR$ (cf.\ \cite[Example 4.6]{ZanRaf_ProvingProdInInfDataStructures_RTA10})
\begin{eqnarray}
\Fp & \to & \Fzip(\Falt,\Fp)\nonumber\\
\Falt & \to & \Fz:\Fone:\Falt\nonumber\\
\Fzip(x:\sigma,\tau) & \to & x:\Fzip(\tau,\sigma)\nonumber
\end{eqnarray}
(where no constant for empty lists is included!) is easily proved $\muCan$-terminating (use \muterm). By Theorem \ref{TheoCanonicalMuTermAndConstructorNormalizationOfTreeSpec}, it is constructor normalizing.
Note that $\cR$ is \emph{exhaustive} due to the sort discipline (for instance, $zip(0,0)$ is not allowed) 
and to the fact that no constructor for lists is provided (i.e., there is no finite list and all lists are of the form $cons(s,t)$
for terms $s$, $t$ where $t$ is always infinite).
\end{example}

As remarked in \cite[Section 3.2]{EndHen_LazyProductivityViaTermination_TCS11}, several authors
define $\cR$ to be productive if it is constructor normalizing 
(e.g., \cite{
EndGraHenIsiKlo_ProductivityOfStreamDefinitions_TCS10,%
ZanRaf_StreamProductivityOutermostTermination_EPTCS09,%
ZanRaf_ProvingProdInInfDataStructures_RTA10}).
Endrullis and Hendriks give a more elaborated (and restrictive) definition of productivity.
Given $t\in\ITerms$ and $\Symbols'\subseteq\Symbols$, a $\Symbols'$-path in $t$ is a (finite or infinite) sequence
$\langle p_1,c_1\rangle$, $\langle p_2,c_2\rangle, \ldots$ such that $c_i=root(t|_{p_i})\in\Symbols'$ and 
$p_{i+1}=p_i.j$ with $1\leq j\leq ar(c_i)$  \cite[Definition 3.7]{EndHen_LazyProductivityViaTermination_TCS11}.
\begin{definition}{\rm  \cite[Definition 3.8]{EndHen_LazyProductivityViaTermination_TCS11}}
A tree specification is said \emph{data-finite} if for all finite ground terms $s\in\GTerms$ and (possibly infinite) 
constructor normal forms $t$ of $s$, every $\CSymbols_\Delta$-path in $t$ (containing data constructors only) is
finite.
\end{definition}

\begin{definition}{\rm  \cite[Definition 3.11]{EndHen_LazyProductivityViaTermination_TCS11}}
A tree specification $\cR$ is \emph{productive} if $\cR$ is constructor normalizing and data-finite.
\end{definition}
In the following result, $\mu_\Delta$ is given by $\mu_\Delta(c)=\{1,\ldots,ar_\Delta(c)\}$ for all  $c\in\CSymbols_\Delta$, 
and $\mu_\Delta(f)=\emptyset$
for all other symbols $f$.

\begin{theorem}\label{TheoCanonicalMuTermAndProductivityOfTreeSpec}
Let $\cR$ be a left-linear, exhaustive TRS and $\mu\in\Rmaps{\cR}$ be such that
$\muCan\sqcup\mu_\Delta\sqsubseteq\mu$. If $\cR$ is $\mu$-terminating, then $\cR$ is productive.
\end{theorem}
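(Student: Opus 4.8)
The plan is to split productivity into its two defining conjuncts, constructor normalization and data-finiteness, and to dispatch the first one immediately. Since $\muCan\sqcup\mu_\Delta\sqsubseteq\mu$ in particular gives $\muCan\sqsubseteq\mu$, we have $\mu\in\CRmaps{\cR}$; as $\cR$ is also left-linear, exhaustive and $\mu$-terminating, Theorem~\ref{TheoCanonicalMuTermAndConstructorNormalizationOfTreeSpec} applies unchanged and shows $\cR$ is constructor normalizing. Hence the whole weight of the proof falls on data-finiteness: I must show that for every finite ground $s\in\GTerms$ and every constructor normal form $\delta\in\IGCTerms$ of $s$, each $\CSymbols_\Delta$-path in $\delta$ is finite. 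This is exactly where the extra hypothesis $\mu_\Delta\sqsubseteq\mu$ is needed.

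First I would isolate two elementary facts. (a) If a node of $\delta$ has a data constructor $c\in\CSymbols_\Delta$ as its root, then that subterm is data-sorted; because the sort discipline places all data arguments first, the only argument positions whose subterm can again be data-sorted (hence rooted by a data constructor) are $1,\ldots,ar_\Delta(c)$. Therefore a $\CSymbols_\Delta$-path can descend only through data-argument positions of data constructors. (b) Since $\mu_\Delta(c)=\{1,\ldots,ar_\Delta(c)\}$ and $\mu_\Delta\sqsubseteq\mu$, every such data-argument position is $\mu$-replacing. Combining (a) and (b), a $\CSymbols_\Delta$-path rooted at the top of a term descends only along $\mu$-replacing positions.

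Next I would exploit the explicit shape of the constructor normal form built in the proof of Theorem~\ref{TheoCanonicalMuTermAndConstructorNormalizationOfTreeSpec}: a finite ground term $s$ first $\mu$-normalizes to a finite head-normal form $t_0=c(v_1,\ldots,v_k)$ with $c$ a constructor, and its constructor normal form is obtained by replacing each $v_i$ with the constructor normal form of the finite ground term $v_i$. The key observation is that at a $\mu$-replacing argument position $i$ of the $\mu$-normal form $t_0$, the subterm $v_i$ is itself a $\mu$-normal form, so that computing its constructor normal form performs no root rewriting; its root is a constructor by the exhaustiveness argument of Theorem~\ref{TheoCanonicalMuTermAndConstructorNormalizationOfTreeSpec}, and hence $root(v_i)$ is preserved. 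Iterating this along a chain of data-argument positions of data constructors — which by (b) are $\mu$-replacing, so that every subterm reached stays $\mu$-normal — shows that the data skeleton of $\delta$ agrees symbol for symbol with the corresponding skeleton of the \emph{finite} term $t_0$. Since $t_0$ is finite there can be no infinite descent, so every $\CSymbols_\Delta$-path starting at the root of $\delta$ is finite.

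Finally I would remove the ``rooted at the top'' restriction: an arbitrary $\CSymbols_\Delta$-path begins at some position $p_1$ and lives entirely inside $\delta|_{p_1}$, and a routine induction on $|p_1|$ using the recursive construction above shows that $\delta|_{p_1}$ is again the constructor normal form of some finite ground term $w$ (an argument of an intermediate $\mu$-normal form, whose finiteness is guaranteed by $\mu$-termination). Applying the rooted case to $w$ then finishes the argument, and combining data-finiteness with constructor normalization yields productivity. I expect the main obstacle to be the careful use of the sort discipline in fact~(a) together with verifying that passing to constructor normal forms never alters roots at $\mu$-replacing positions: one must be certain that remaining among data constructors genuinely forces descent through the first (data) argument positions, which is precisely the point where the ``data arguments come first'' convention and the hypothesis $\mu_\Delta\sqsubseteq\mu$ interlock.
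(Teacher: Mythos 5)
Your overall decomposition is the paper's: split productivity into constructor normalization plus data-finiteness, get the first conjunct from $\muCan\sqsubseteq\mu$ via Theorem~\ref{TheoCanonicalMuTermAndConstructorNormalizationOfTreeSpec}, and spend the hypothesis $\mu_\Delta\sqsubseteq\mu$ on data-finiteness. Your facts (a) and (b) are correct and are implicitly used in the paper too (its path descends through positions $i_j$ with $1\leq i_j\leq ar_\Delta(c_j)$, which are $\mu$-replacing because $i_j\in\mu_\Delta(c_j)$). But there is a genuine gap in the core of your data-finiteness argument: you prove finiteness of $\CSymbols_\Delta$-paths only for the \emph{particular} constructor normal form produced by the recursive construction of Theorem~\ref{TheoCanonicalMuTermAndConstructorNormalizationOfTreeSpec} ($\mu$-normalize $s$ to the finite $t_0$, then recurse on the arguments). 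Data-finiteness quantifies over \emph{all} (possibly infinite) constructor normal forms of all finite ground terms, and the theorem is stated for left-linear, exhaustive TRSs --- not orthogonal ones; this extra generality is exactly what lets the paper supersede Raffelsieper's non-orthogonal setting. Without orthogonality there is no confluence, a term may have several distinct constructor normal forms, and a constructor normal form reached by an arbitrary (infinitary) reduction from $s$ need not factor through $t_0$ at all; your skeleton-agreement claim (``the data skeleton of $\delta$ agrees symbol for symbol with that of $t_0$'') is only established along reductions that do. Your phrase ``\emph{its} constructor normal form'' silently assumes uniqueness, which you neither assume (orthogonality) nor prove (a nontrivial $\mathit{UN}^\infty$-type property of infinitary rewriting).

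The paper closes exactly this hole by arguing by contradiction from an \emph{arbitrary} ground (constructor) normal form $t$ of $s$ carrying an infinite $\CSymbols_\Delta$-path: it extracts finite reductions $s\to^* c_1(s^1_1,\ldots,s^1_{k_1})$, $s^1_{i_1}\to^* c_2(\ldots)$, $\ldots$ producing the successive constructors of that path, and then repeatedly invokes Theorem~\ref{TheoReescConstructorHeadNormalForm} --- which your proposal never uses --- to replay each of these with \csr, reaching a constructor head-normal form with the \emph{same} head symbol as the given reduction. Since the path positions are data-argument positions of data constructors, hence $\mu$-replacing by $\mu_\Delta\sqsubseteq\mu$ (your fact (b)), these context-sensitive reductions concatenate (infinitely many of them being nonempty, as $s$ and all intermediate terms are finite) into an infinite $\hto_\mu$-sequence, contradicting $\mu$-termination. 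It is precisely the ``same head symbol as the given reduction'' feature of Theorem~\ref{TheoReescConstructorHeadNormalForm} that copes with non-confluence and with arbitrary constructor normal forms. Your direct, construction-based analysis would yield a correct (and arguably more informative) proof for orthogonal tree specifications once uniqueness of infinitary constructor normal forms is supplied, but as written it does not prove the theorem at the stated level of generality; to repair it, either add and justify that uniqueness, or switch to the paper's contradiction argument for the data-finiteness half.
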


\begin{proof}
Since $\muCan\sqsubseteq\mu$, constructor normalization of $\cR$ follows by Theorem \ref{TheoCanonicalMuTermAndConstructorNormalizationOfTreeSpec}.
Thus, if $\cR$ is not productive, there must be a ground normal form $t$ of a term $s$ with an infinite
$\CSymbols_\Delta$-path.
Without loss of generality, we can assume that 
$s\to^*s_1=c_1(s^1_1,\ldots,s^1_{k_1})$ for some $c_1\in\CSymbols_\Delta$,
and then 
$s^1_{i_1}\to^*c_2(s^2_1,\ldots,s^2_{k_2})$ for some $i_1$, $1\leq	i_1\leq ar_\Delta(c_1)$ and $c_2\in\CSymbols_\Delta$, etc., in such a way that this reduction sequences follow the computation of $t$ and produce the
$\CSymbols_\Delta$-path $\langle\toppos,c_1\rangle$, $\langle i_1,c_2\rangle$, $\langle i_1.i_2,c_3\rangle$,$\ldots$

By Theorem \ref{TheoReescConstructorHeadNormalForm}, 
$s\hto^*\ol{s}_1=c_1(\ol{s}^1_1,\ldots,\ol{s}^1_{k_1})$ for some terms 
$\ol{s}^1_1,\ldots,\ol{s}^1_{k_1}$ such that $\ol{s}^1_j\to^*s^1_j$ for all $j$, $1\leq j\leq k_1$.
Thus, by Theorem \ref{TheoReescConstructorHeadNormalForm} we also 
have $\ol{s}^1_{i_1}\hto^*c_2(\ol{s}^2_1,\ldots,\ol{s}^2_{k_2})$ and $\ol{s}^2_j\to^*s^2_j$ for all $j$, $1\leq j\leq k_2$.
Since $i_1\in\mu_\Delta(c_1)$, we have $s\hto^*\ol{s}_2=c_1(\ol{s}^1_1,\ldots,\ol{s}^1_{i_1-1},c_2(\ol{s}^2_1,\ldots,\ol{s}^2_{i_2},\ldots,\ol{s}^2_{k_2}),\ldots,\ol{s}^1_{k_1})$ with $\ol{s}^2_{i_2}\to^*s^2_{i_2}$ again.
Since $i_1.i_2\in\Pos^{\mu_\Delta}(\ol{s}_2)$, we can continue with this construction to obtain an infinite $\mu$-rewriting
sequence which contradicts $\mu$-termination of $\cR$.
\end{proof}

\begin{example}\label{Ex6_8_EH11_Productivity}
For the tree specification $\cR$ in Example \ref{Ex6_8_EH11} (see also
Example \ref{Ex6_8_EH11_treeSpecification}), we have
 $ar_\Delta(\FS)=1$ and 
$ar_\Delta(\FL)=0$. 
Then, $\mu_\Delta(\FS)=\{1\}$ and $\mu_\Delta(\FL)=\emptyset$. 
Now $\mu=\muCan\sqcup\mu_\Delta$ is as given in Example \ref{Ex6_8_EH11}.
The $\mu$-termination of $\cR$ can be proved with $\muterm$.
By Theorem \ref{TheoCanonicalMuTermAndProductivityOfTreeSpec}, productivity of $\cR$ follows.
\end{example}

\begin{example}\label{Ex4_6_ZR10_Productivity}
We also prove productivity of $\cR$ in Example \ref{Ex4_6_ZR10}.
Here, $\Delta=\{d\}$ and $\Gamma=\{s\}$ with $\CSymbols_\Delta=\{\Fz,\Fone\}$ and $\CSymbols_\Gamma=\{\Fcons\}$
where $ar_\Delta(\Fcons)=1$.
Thus, $\mu=\muCan\sqcup\mu_\Delta$ yields $\mu(\Fzip)=\mu(\Fcons)=\{1\}$.
The $\mu$-termination of $\cR$ can be proved with $\muterm$ and by Theorem \ref{TheoCanonicalMuTermAndProductivityOfTreeSpec}  productivity of $\cR$ follows.
\end{example}
In general, Theorem \ref{TheoCanonicalMuTermAndProductivityOfTreeSpec} does \emph{not} hold in the opposite direction, i.e., productivity of $\cR$ does not imply
its $\mu$-termination. 

\begin{example}\label{Ex5_3_EH11}
Let $\cR$ be (cf.\ \cite[Example 5.3]{EndHen_LazyProductivityViaTermination_TCS11}):
\begin{eqnarray}
\Fs & \to & \Fb:\Fs\nonumber\\
\Ff(\Fa,\sigma) & \to & \sigma\nonumber\\
\Ff(\Fb,x:y:\sigma) & \to & \Fb:\Ff(\Fb,y:\sigma)\nonumber
\end{eqnarray}
Note that
$\muCan(:)=\{2\}$ due to the third rule. 
This makes $\cR$ non-$\muCan$-terminating due to the first rule.
We cannot use Theorem \ref{TheoCanonicalMuTermAndProductivityOfTreeSpec}
to prove $\cR$ productive, but it is (see Example \ref{Ex5_3_EH11_Transformed} below).
\end{example}
Regarding constructor normalization, we have:

\begin{theorem}\label{TheoConstructorNormalizationImpliesMuCanTerm}
Let $\cR$ be a 
orthogonal strongly compatible TRS
such that either
\begin{enumerate}
\item $\muCan(c)=\emptyset$ for all $c\in\CSymbols$, or
\item $\cR$ contains no collapsing rule and $\muCan(c)=\emptyset$ for all 
constructor symbols $c\in\CSymbols_\cR$ such that $c=root(r)$ for some $\ell\to r\in\cR$.
\end{enumerate}
If $\cR$ is constructor normalizing, then it is $\muCan$-terminating.
\end{theorem}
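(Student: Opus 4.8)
The plan is to prove the contrapositive: assuming $\cR$ is \emph{not} $\muCan$-terminating, I will produce a finite ground term with no (possibly infinite) constructor normal form, which contradicts constructor normalization. First I would pass to a \emph{minimal} non-$\muCan$-terminating term $t$, which we may take ground, i.e.\ a non-$\muCan$-terminating term all of whose proper $\muCan$-replacing subterms are $\muCan$-terminating. By the standard theory of minimal non-$\mu$-terminating terms \cite{AlaGutLuc_CSDPs_IC10}, $t$ admits an infinite context-sensitive sequence $t \csrews{\muCan} u_0 \excsr t_1 \csrews{\muCan} u_1 \excsr t_2 \csrews{\muCan}\cdots$ in which the marked steps $\excsr$ are contractions at the root $\toppos$, the intermediate $\csrews{\muCan}$-segments take place strictly below the root, and there are \emph{infinitely many} root steps. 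A crucial observation, valid under either hypothesis, is that a constructor-rooted term is inert: if $c\in\CSymbols$ has $\muCan(c)=\emptyset$, then $c(\ldots)$ has $\toppos$ as its only $\muCan$-replacing position, and $\toppos$ is not a redex because no constructor heads a left-hand side, so $c(\ldots)$ is a $\muCan$-normal form. Hence $root(t)$ is a defined symbol (minimality rules out a non-inert constructor root as well), and since each root step $\sigma(\ell)\excsr\sigma(r)$ installs $root(r)$ as the new root---a right-hand side root when $\cR$ has no collapsing rule, and in every case a symbol that, were it a constructor, would be inert and stop the sequence---\emph{all} the $t_i$ and $u_i$ are defined-rooted.

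Next I would assume, towards a contradiction, that $\cR$ \emph{is} constructor normalizing. Then $t \to^* \delta$ for some $\delta\in\IGCTerms$, and since $root(\delta)=c_1\in\CSymbols$ while $root(t)$ is defined, a finite prefix of this reduction already gives $t \to^* c_1(s_1,\ldots,s_m)$. The constructor root $c_1$ can only be created by a root contraction, so under Case~2 (no collapsing rule) $c_1=root(r)$ for some rule $\ell\to r$, whence $\muCan(c_1)=\emptyset$; under Case~1 this holds outright. Applying Theorem~\ref{TheoReescConstructorHeadNormalForm} yields $t \csrews{\muCan} c_1(w_1,\ldots,w_m)=:N$ with $w_j\to^* s_j$, and by the inertness observation $N$ is a $\muCan$-normal form.

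It remains to turn ``$t$ is non-$\muCan$-terminating, yet $t \csrews{\muCan} N$ with $N$ a $\muCan$-normal form'' into a contradiction, and this is the step I expect to be the main obstacle: weak $\muCan$-normalization does not, on its own, preclude an infinite $\muCan$-sequence (e.g.\ below-root loops), so orthogonality must do real work here. I would first invoke confluence of context-sensitive rewriting for orthogonal TRSs to conclude that $N$ is the \emph{unique} $\muCan$-normal form reachable from $t$ and from every reduct of $t$, in particular from each $u_i$. The technical core is then a residual / finite-developments estimate for orthogonal $\muCan$-rewriting: the number of root contractions separating a term from its (root-stable) normal form $N$ is finite, is left unchanged by a below-root step, and strictly decreases at each root contraction. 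This furnishes a natural-number measure on the terms of the minimal sequence that must strictly decrease at each of its infinitely many root steps---impossible. Strong compatibility is what makes this bookkeeping clean: because $\Pos_\Symbols(\ell)=\Pos^{\muCan}(\ell)$ for every left-hand side, exposing and matching a root redex only ever requires reductions at $\muCan$-replacing positions, so the root activity of the minimal sequence is faithfully captured by, and counted against, the root-distance to $N$.

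Finally I would assemble the two cases. In both, $root(t)$ is defined and every root created along the minimal sequence is non-inert, so $t$ cannot head-normalize to a constructor-rooted term without contradicting its infinite root activity; since the root of every $\delta\in\IGCTerms$ is a constructor, $t$ has no constructor normal form. Hence $\cR$ is not constructor normalizing, which proves the contrapositive and therefore the theorem.
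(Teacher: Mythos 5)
Your proposal reproduces the easy half of the argument correctly (the inertness of constructor-rooted terms under the case hypotheses, the identification of the reached head symbol with a right-hand-side root when there are no collapsing rules, and the use of Theorem~\ref{TheoReescConstructorHeadNormalForm} plus confluence), but it rests on two load-bearing steps that do not hold up. The first is structural: the ``standard theory of minimal non-$\mu$-terminating terms'' does \emph{not} give you an infinite top-level sequence $t \csrews{\muCan} u_0 \excsr t_1 \csrews{\muCan} u_1 \excsr \cdots$ with infinitely many root steps. What it gives (and this is precisely why context-sensitive dependency pairs exist) is a single root step $\sigma(\ell)\excsr\sigma(r)$ after which the infinite activity \emph{descends} into a subterm of $\sigma(r)$ --- possibly even a subterm instantiating a variable of $\ell$, which under strong compatibility sits at a non-$\muCan$-replacing (hidden) position of $\sigma(\ell)$. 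For instance, with rules $f(x)\to c(g)$, $g\to g$ and $\mu(c)=\{1\}$, the minimal term $f(a)$ admits exactly one root step. Your inertness hypotheses block this particular example, but you never argue that they force infinitely many root steps at the top: under Case~2, constructors that do not root any right-hand side may have $\muCan(c)\neq\emptyset$, so not every constructor-rooted reduct is inert, and the nested-descent picture survives. The contradiction you assemble at the end (``infinite root activity'' against a root-distance measure) is therefore built on a sequence shape you have not established.

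The second problem is that what you yourself flag as ``the main obstacle'' --- the measure counting root contractions to the unique $\muCan$-normal form $N$, invariant under below-root steps and strictly decreasing at root steps --- is exactly the mathematical content of the theorem, and it is asserted rather than proved; in particular its invariance under below-root steps is the hard part, since such steps can duplicate or erase material relevant to exposing root redexes. The paper discharges precisely this step by citation: by \cite[Theorem 4.6]{Lucas_NeedRedCSRR_ALP97}, every $\muCan$-replacing redex in a term that is not a head-normal form is \emph{root-needed} in the sense of \cite{Midd_CBNCompRootStableForm_POPL97}, and Middeldorp's theory then yields that every $\muCan$-reduction sequence from a head-normalizing term reaches a head-normal form. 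With that in hand the paper's proof is direct, with no contrapositive and no minimal terms: constructor normalization makes every term head-normalizing; root-neededness makes every $\muCan$-sequence head-normalizing; confluence (orthogonality) makes the reached head-normal form constructor-rooted; and the case hypotheses make that constructor inert, so every $\muCan$-sequence halts. Unless you either invoke this root-neededness result or reconstruct it, your attempt has a genuine gap at its core, compounded by the misstated structure of minimal non-$\muCan$-terminating sequences.
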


\begin{proof}
Since $\cR$ is constructor normalizing, $\cR$ is head-normalizing, i.e., every term $s$ has a (constructor) 
head-normal form $t$, i.e., $root(t)\in\CSymbols$. 
By \cite[Theorem 4.6]{Lucas_NeedRedCSRR_ALP97}, 
every $\muCan$-replacing redex in a term $s$ which is not a head-normal form is root-needed
(see \cite{Midd_CBNCompRootStableForm_POPL97}). Thus, 
every $\muCan$-reduction sequence with $\cR$ is head-normalizing. 
Furthermore, since every term $s$ is head-normalizing, 
every $\muCan$-rewrite sequence starting from $s$
yields a head-normal form $t$ which, by confluence of $\cR$, is a constructor head-normal form, i.e.,
$t=c(t_1,\ldots,t_k)$ for some $c\in\CSymbols$.
We have two cases:
\begin{enumerate}
\item If $\muCan(c)=\emptyset$ for all constructor symbols $c$, then
$t$ is a $\mu$-normal form.
\item Otherwise, we can assume that $s$ is not  a head-normal form and then, since there is no
collapsing rule, the root symbol $c$ of $t$ must be introduced by the last rule applied to the root
in the head-normalizing sequence. 
Hence,  by our assumption, $\mu(c)=\emptyset$ as well. 
\end{enumerate}
Thus, every $\muCan$-rewrite sequence starting from any term $s$ is finite and 
$\cR$ is $\muCan$-terminating.
\end{proof}

\section{Related work}\label{SecRelatedWork}
In~\cite{ZanRaf_ProvingProdInInfDataStructures_RTA10}, Zantema and Raffelsieper
develop a general technique to prove productivity of specifications of
infinite objects based on proving context-sensitive termination.
In the following result, we use the terminology in Section \ref{SecInfNormalizationProductivity}, 
borrowed from \cite{EndHen_LazyProductivityViaTermination_TCS11}.
Consistently, since the notion of `productivity' in \cite{ZanRaf_ProvingProdInInfDataStructures_RTA10},
corresponds to constructor normalization (see Section  \ref{SecInfNormalizationProductivity}),
we have the following.

\begin{theorem}{\rm \cite[Theorem 4.1]{ZanRaf_ProvingProdInInfDataStructures_RTA10}}\label{TheoZR10_Theorem4_1}
Let $\cR$ be a \emph{proper} tree specification and $\mu\in\Rmaps{\cR}$ given by $\mu(f)=\{1,\ldots,ar(f)\}$
if $f\in\DSymbols$ and $\mu(c)=\{1,\ldots,ar_\Delta(c)\}$ if $c\in\CSymbols$.
If $\cR$ is $\mu$-terminating, then $\cR$ is constructor normalizing.
\end{theorem}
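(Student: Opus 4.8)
The plan is to reduce the claim to the already-established \rTh{TheoCanonicalMuTermAndConstructorNormalizationOfTreeSpec}. A (proper) tree specification is orthogonal, hence left-linear, and exhaustive by definition, so the only hypothesis of that theorem left to secure is that the displayed map $\mu$ is \emph{canonical}, i.e.\ $\mu\in\CRmaps{\cR}$, equivalently $\muCan\sqsubseteq\mu$. Once this is in hand, $\mu$-termination of $\cR$ yields constructor normalization immediately, with no further work.

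To check $\muCan\sqsubseteq\mu$ I would argue along the partition $\Symbols=\CSymbols\uplus\DSymbols$. For a defined symbol $f\in\DSymbols$ the statement puts $\mu(f)=\{1,\ldots,ar(f)\}=\mu_\top(f)$, so $\muCan(f)\subseteq\mu(f)$ holds trivially. Hence all the content sits in the constructors: I must show $\muCan(c)\subseteq\{1,\ldots,ar_\Delta(c)\}=\mu(c)$ for every $c\in\CSymbols$. By the elementary calculus computing $\muCan$, an index $i$ belongs to $\muCan(c)$ precisely when some left-hand side $\ell\in L(\cR)$ has an occurrence of $c$ (at some $p\in\Pos_\Symbols(\ell)$ with $root(\ell|_p)=c$) whose $i$-th argument is again a non-variable subterm.

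This is exactly where \emph{properness} is used, and I expect it to be the only delicate point. The inclusion is \emph{not} a formal consequence of the sort discipline alone; rather, properness of the tree specification restricts the \emph{shape} of the constructor patterns appearing in left-hand sides so that structural (non-variable) matching is performed on \emph{data} arguments only, every codata-sorted argument below a constructor in a pattern being forced to be a variable. Combined with the convention (inherited from \cite{EndHen_LazyProductivityViaTermination_TCS11,ZanRaf_ProvingProdInInfDataStructures_RTA10}) that the $ar_\Delta(c)$ data arguments of each symbol occupy its first positions, this forces any non-variable argument position $i$ of an occurrence of $c$ to satisfy $i\le ar_\Delta(c)$. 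Therefore $\muCan(c)\subseteq\{1,\ldots,ar_\Delta(c)\}=\mu(c)$, completing $\muCan\sqsubseteq\mu$.

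With $\mu\in\CRmaps{\cR}$ established and $\cR$ left-linear and exhaustive, \rTh{TheoCanonicalMuTermAndConstructorNormalizationOfTreeSpec} delivers the conclusion. The main obstacle is thus isolated in the constructor inclusion above: the argument must make explicit why properness guarantees that codata argument positions of constructors never carry nested non-variable structure, since it is this---and not merely the typing---that keeps $\muCan$ below the relatively permissive map $\mu$ of the statement.
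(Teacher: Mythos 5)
Your proposal is correct and follows essentially the same route as the paper, whose remark accompanying the theorem derives it as a particular case of \rTh{TheoCanonicalMuTermAndConstructorNormalizationOfTreeSpec} by checking that the stated $\mu$ is canonical, i.e., $\muCan\sqsubseteq\mu$ (trivial on defined symbols, substantive on constructors), with left-linearity and exhaustiveness coming for free from the tree-specification format. The only, harmless, divergence is in how canonicity at constructors is justified: the paper observes that proper left-hand sides contain \emph{no} nested constructor symbols at all (each argument of a constructor occurrence in a pattern is a variable), giving $\muCan(c)=\emptyset$ outright, whereas you permit non-variable structure under the data argument positions of constructors and conclude only the weaker inclusion $\muCan(c)\subseteq\{1,\ldots,ar_\Delta(c)\}$, which still suffices for $\mu\in\CRmaps{\cR}$ and hence for the reduction.
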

\begin{remark}
Theorem \ref{TheoZR10_Theorem4_1} is a particular case of Theorem \ref{TheoCanonicalMuTermAndConstructorNormalizationOfTreeSpec}: proper tree specifications
are TRSs with rules $\ell\to r$ whose left-hand sides $\ell$ contain no nested constructor symbols,
i.e., they are of the form $\ell=f(\delta_1,\ldots,\delta_k)$, where $\delta_i$ is either a variable or a \emph{flat} constructor
term $c_i(x_1,\ldots,x_m)$ for some constructor symbol $c_i$ and variables $x_1,\ldots,x_m$.
In this case, the replacement map $\mu$ required in Theorem \ref{TheoZR10_Theorem4_1} is \emph{canonical}, i.e.,
$\mu\in\CRmaps{\cR}$.
\end{remark}
Example \ref{Ex4_6_ZR10} is given in \cite[Example 4.6]{ZanRaf_ProvingProdInInfDataStructures_RTA10}
to illustrate a tree specification $\cR$ where Theorem \ref{TheoZR10_Theorem4_1} 
can \emph{not} be used to prove constructor normalization.
Indeed, $\cR$ is not $\mu$-terminating if $\mu$ is defined as required in Theorem \ref{TheoZR10_Theorem4_1}.
In contrast, 
Theorem \ref{TheoCanonicalMuTermAndConstructorNormalizationOfTreeSpec} was used
in Example \ref{Ex4_6_ZR10} to prove constructor normalization of $\cR$
and Theorem \ref{TheoCanonicalMuTermAndProductivityOfTreeSpec} 
was used in Example \ref{Ex4_6_ZR10_Productivity} to prove productivity of $\cR$.

In \cite{EndHen_LazyProductivityViaTermination_TCS11} Endrullis and Hendriks 
have devised a sound and complete transformation of productivity
to context-sensitive termination.
The transformation proceeds in two steps. First, an \emph{inductively sequential} (see \cite{Antoy_DefTrees_ALP92}) 
tree specification $\cR$ is
transformed into a \emph{shallow} tree specification $\cR'$ by a \emph{productivity preserving} transformation
\cite[Definition 5.1]{EndHen_LazyProductivityViaTermination_TCS11} and 
\cite[Theorem 5.5]{EndHen_LazyProductivityViaTermination_TCS11}.
Here,
$\cR$ is \emph{shallow} if for each $k$-ary defined symbol $f\in\DSymbols$ there is a set $I_f\subseteq\{1,\ldots,k\}$ 
such that for each rule $f(p_1,\ldots,p_k)\to r$, every $p_i$ satisfies   \cite[Definition 3.14]{EndHen_LazyProductivityViaTermination_TCS11}:
\begin{enumerate}
\item If $i\in I_f$, then $p_i=c_i(x_1,\ldots,x_m)$ for some $c\in\CSymbols$ and variables $x_1,\ldots,x_m\in\Variables$; and
\item If $i\notin I_f$, then $p_i\in\Variables$.
\end{enumerate}
\begin{example}\label{Ex5_3_EH11_Transformed}
The (inductively sequential) TRS $\cR$ in Example \ref{Ex5_3_EH11} is \emph{not} shallow, but it is transformed 
by the first transformation into the following TRS $\cR'$ (adapted from \cite[Example 5.3]{EndHen_LazyProductivityViaTermination_TCS11}):
\begin{eqnarray}
\Fs & \to & \Fb:\Fs\nonumber\\
\Ff(\Fa,\sigma) & \to & \Ffa(\sigma)\nonumber\\
\Ffa(\sigma) & \to & \sigma\nonumber\\
\Ff(\Fb,\sigma) & \to & \Ffb(\sigma)\nonumber\\
\Ffb(x:\sigma) & \to & \Ffbcons(x,\sigma)\nonumber\\
\Ffbcons(x,y:\sigma) & \to & \Fb:\Ff(\Fb,y:\sigma)\nonumber
\end{eqnarray}
Since $\cR'$ is productive if and only if $\cR$ is, we use 
now Theorem \ref{TheoCanonicalMuTermAndProductivityOfTreeSpec} (with $\mu=\muCan$, since $\mu_\Delta=\mu_\bot$) 
to prove $\cR$ productive.
This shows (see Example \ref{Ex5_3_EH11}) that Theorem \ref{TheoCanonicalMuTermAndProductivityOfTreeSpec} 
 does not extend to a characterization of 
productivity as termination of \csr.
\end{example}

\begin{proposition}\label{PropShallowTreeSpecAreStronglyCompatible}
Shallow tree specifications $\cR$ are strongly compatible constructor TRSs where $\muCan(c)=\emptyset$ for all
$c\in\CSymbols$.
\end{proposition}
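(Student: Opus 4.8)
The plan is to establish the three asserted properties in turn, the guiding observation being that combining the constructor discipline of a tree specification with shallowness forces every non-variable position of a left-hand side to sit at depth at most one, carrying only variables beneath it. First I would record the shape of the left-hand sides. Since a tree specification is by definition a constructor TRS, every $\ell\in L(\cR)$ has the form $\ell=f(p_1,\ldots,p_k)$ with $f\in\DSymbols$ and each $p_i$ a constructor term; shallowness refines this to $p_i\in\Variables$ when $i\notin I_f$, and $p_i=c_i(x_1,\ldots,x_m)$ with $c_i\in\CSymbols$ and $x_1,\ldots,x_m\in\Variables$ when $i\in I_f$. In particular the only defined symbol occurring in $\ell$ is the root $f$, which re-confirms that $\cR$ is a constructor TRS, and reading off the non-variable positions gives $\Pos_\Symbols(\ell)=\{\toppos\}\cup I_f$ (viewing $I_f$ as a set of length-one positions).

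Next I would compute $\muCan$ on both kinds of symbols via the calculus $i\in\muCan(g)$ iff some $\ell\in L(\cR)$ has a position $p\in\Pos_\Symbols(\ell)$ with $root(\ell|_p)=g$ and $p.i\in\Pos_\Symbols(\ell)$. For a constructor $c\in\CSymbols$, any occurrence of $c$ in a left-hand side must (since $f$ is the only defined symbol and the arguments are flat) be at a depth-one position $i\in I_f$ with $\ell|_i=c(x_1,\ldots,x_m)$ whose arguments are all variables, so no position $i.j$ below it lies in $\Pos_\Symbols(\ell)$; hence $\muCan(c)=\emptyset$, which is the third claim. For a defined symbol $f$, the symbol $f$ occurs only at the root of its own rules, and $p.i\in\Pos_\Symbols(\ell)$ with $root(\ell|_p)=f$ holds exactly when $p=\toppos$ and $p_i$ is non-variable, i.e.\ exactly when $i\in I_f$; thus $\muCan(f)=I_f$.

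Finally I would verify strong compatibility, i.e.\ $\Pos_\Symbols(\ell)=\Pos^{\muCan}(\ell)$ for every $\ell\in L(\cR)$; since $\mu_{L(\cR)}=\muCan$, this is precisely strong compatibility of $L(\cR)$. Unfolding $\Pos^{\muCan}$ at $\ell=f(p_1,\ldots,p_k)$ and using $\muCan(f)=I_f$ yields $\Pos^{\muCan}(\ell)=\{\toppos\}\cup\bigcup_{i\in I_f}i.\Pos^{\muCan}(p_i)$, and for each $i\in I_f$, since $\muCan(c_i)=\emptyset$, we get $\Pos^{\muCan}(p_i)=\{\toppos\}$. Hence $\Pos^{\muCan}(\ell)=\{\toppos\}\cup I_f=\Pos_\Symbols(\ell)$, as computed above. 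I expect no genuine obstacle here; the only point requiring care is to use both hypotheses jointly — the constructor property to confine $f$ to the root, and shallowness to keep the constructor arguments flat — since it is exactly their conjunction that collapses $\muCan$ on constructors and pins down the non-variable positions.
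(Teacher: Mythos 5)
Your proof is correct and takes essentially the same approach as the paper's: both rest on the observation that for a shallow left-hand side $\ell=f(p_1,\ldots,p_k)$ one has $\Pos_\Symbols(\ell)=\{\toppos\}\cup I_f$, and that the replacement map sending $f\mapsto I_f$ and $c\mapsto\emptyset$ for $c\in\CSymbols$ is strongly compatible with $L(\cR)$. The only (harmless) difference is organizational: the paper verifies strong compatibility of this candidate map and then invokes the uniqueness of strongly compatible replacement maps to conclude it equals $\muCan$, whereas you compute $\muCan$ directly from the calculus characterization and then unfold $\Pos^{\muCan}(\ell)$ to confirm strong compatibility.
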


\begin{proof}
Let $\mu(f)=I_f$ for all $f\in\DSymbols$ and $\mu(f)=\emptyset$ for all $f\in\CSymbols$.
For all $\ell\in L(\cR)$, $\Pos^\mu(\ell)=\Pos_\Symbols(\ell)$, i.e., $\cR$ is strongly
compatible. Since $\muCan$ is the only replacement map that makes $\cR$ strongly compatible,  
$\mu=\muCan$ and $\muCan(c)=\emptyset$ for all $c\in\CSymbols$.
\end{proof}
In Endrullis and Hendriks' approach, a second transformation obtains a CS-TRS $(\cR'',\mu)$  from $\cR'$ 
 (see \cite[Definition 6.1]{EndHen_LazyProductivityViaTermination_TCS11}) in such a way that 
 $\mu$-termination of $\cR''$ is \emph{equivalent}
to productivity of $\cR'$ \cite[Theorem 6.6]{EndHen_LazyProductivityViaTermination_TCS11}.

\begin{remark}
First Endrullis and Hendriks' transformation preserves productivity. Thus, we can use
$\cR'$ together with Theorem \ref{TheoCanonicalMuTermAndProductivityOfTreeSpec} to prove productivity
of $\cR$ \emph{without using} the second transformation. We proceed in this way
in Example \ref{Ex5_3_EH11_Transformed}, where we conclude 
productivity of $\cR'$ without using the second transformation described in 
\cite[Definition 6.1]{EndHen_LazyProductivityViaTermination_TCS11}.
\end{remark}
By Theorem \ref{TheoConstructorNormalizationImpliesMuCanTerm} and Proposition \ref{PropShallowTreeSpecAreStronglyCompatible}, we have:

\begin{corollary}
Constructor normalizing 
shallow tree specifications 
$\cR$ 
are $\muCan$-terminating.
\end{corollary}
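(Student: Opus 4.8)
The plan is to obtain the statement as a direct instance of \rTh{TheoConstructorNormalizationImpliesMuCanTerm}, feeding it the structural facts about shallow tree specifications supplied by \rProp{PropShallowTreeSpecAreStronglyCompatible}. First I would apply \rProp{PropShallowTreeSpecAreStronglyCompatible} to the given shallow tree specification $\cR$: this yields that $\cR$ is a strongly compatible constructor TRS and, crucially, that $\muCan(c)=\emptyset$ for \emph{every} constructor symbol $c\in\CSymbols$.

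Next I would recall that, being a tree specification, $\cR$ is by definition orthogonal. Together with strong compatibility and $\muCan(c)=\emptyset$ for all $c\in\CSymbols$, this shows that $\cR$ satisfies the hypotheses of \rTh{TheoConstructorNormalizationImpliesMuCanTerm} via its first alternative (item~1). Since the corollary assumes that $\cR$ is constructor normalizing, all hypotheses of \rTh{TheoConstructorNormalizationImpliesMuCanTerm} are then met, and I would conclude immediately that $\cR$ is $\muCan$-terminating.

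There is essentially no obstacle here; the work is purely a matter of verifying that the definitions line up, which is why the result is phrased as a corollary. The one point worth stating explicitly is that we invoke the \emph{first} disjunct of the hypothesis of \rTh{TheoConstructorNormalizationImpliesMuCanTerm} rather than the second: because \rProp{PropShallowTreeSpecAreStronglyCompatible} guarantees $\muCan(c)=\emptyset$ for \emph{all} constructors, and not merely for those occurring at the root of a right-hand side, we never need the no-collapsing-rule condition of item~2, and no further hypotheses on $\cR$ have to be checked.
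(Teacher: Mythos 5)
Your proof is correct and matches the paper's own argument exactly: the paper derives this corollary precisely by combining Theorem~\ref{TheoConstructorNormalizationImpliesMuCanTerm} (via its first alternative) with Proposition~\ref{PropShallowTreeSpecAreStronglyCompatible}, with orthogonality coming from the definition of tree specification. Your explicit remark that the first disjunct is used, so the no-collapsing-rule condition is never needed, is a correct and welcome clarification of what the paper leaves implicit.
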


With Theorem \ref{TheoCanonicalMuTermAndConstructorNormalizationOfTreeSpec}, 
we have the following characterization of shallow 
tree specifications (see also \cite[Theorem 6.5]{EndHen_LazyProductivityViaTermination_TCS11}).
\begin{corollary}\label{CoroCanonicalMuTermAndConstructorNormalizationOfShallowTreeSpec}
A shallow tree specification  $\cR$ is constructor normalizing if and only if it is $\muCan$-terminating.
\end{corollary}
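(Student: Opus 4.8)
The plan is to establish the two implications separately, each as an immediate consequence of a result already proved above. Since the statement is a biconditional relating constructor normalization and $\muCan$-termination of a shallow tree specification $\cR$, neither direction requires fresh machinery: the substance has effectively been supplied by Theorem \ref{TheoCanonicalMuTermAndConstructorNormalizationOfTreeSpec}, Theorem \ref{TheoConstructorNormalizationImpliesMuCanTerm}, and Proposition \ref{PropShallowTreeSpecAreStronglyCompatible}.

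For the ``if'' direction, assuming $\cR$ is $\muCan$-terminating, I would first observe that a shallow tree specification is in particular a tree specification, hence an orthogonal---and therefore left-linear---exhaustive constructor TRS. Because $\muCan\sqsubseteq\muCan$ holds trivially, we have $\muCan\in\CRmaps{\cR}$. Applying Theorem \ref{TheoCanonicalMuTermAndConstructorNormalizationOfTreeSpec} with $\mu=\muCan$ then yields constructor normalization of $\cR$ at once.

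For the ``only if'' direction, assuming $\cR$ is constructor normalizing, I would invoke Proposition \ref{PropShallowTreeSpecAreStronglyCompatible}, which guarantees that a shallow tree specification is an orthogonal, strongly compatible constructor TRS satisfying $\muCan(c)=\emptyset$ for every $c\in\CSymbols$. This is precisely case (1) of the hypothesis of Theorem \ref{TheoConstructorNormalizationImpliesMuCanTerm}, so that theorem applies and delivers $\muCan$-termination of $\cR$. Equivalently, this direction is exactly the Corollary stated immediately before.

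Because both directions reduce to citing earlier results once the hypotheses have been checked, there is no substantial obstacle. The only point requiring attention is the routine verification that the structural hypotheses match: orthogonality and exhaustiveness (inherited from being a tree specification), strong compatibility together with the constructor condition $\muCan(c)=\emptyset$ (supplied by Proposition \ref{PropShallowTreeSpecAreStronglyCompatible}), and the membership $\muCan\in\CRmaps{\cR}$ (immediate from $\muCan\sqsubseteq\muCan$). Once these are confirmed, the biconditional follows by combining the two cited implications.
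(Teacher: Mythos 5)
Your proof is correct and follows exactly the paper's route: the paper obtains the ``only if'' direction from Theorem~\ref{TheoConstructorNormalizationImpliesMuCanTerm} combined with Proposition~\ref{PropShallowTreeSpecAreStronglyCompatible} (stated as the corollary immediately preceding), and the ``if'' direction from Theorem~\ref{TheoCanonicalMuTermAndConstructorNormalizationOfTreeSpec} with $\mu=\muCan$, using that tree specifications are left-linear and exhaustive. Your hypothesis-checking (orthogonality, exhaustiveness, strong compatibility with $\muCan(c)=\emptyset$ for $c\in\CSymbols$, and $\muCan\in\CRmaps{\cR}$) is precisely the verification the paper leaves implicit.
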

However, we also have
\begin{corollary}\label{CoroCanonicalMuTermAndConstructorNormalizationOfStronglyCompatibleTreeSpec}
A strongly compatible tree specification $\cR$ without collapsing rules and such that
$\muCan(c)=\emptyset$ for all 
constructor symbols $c\in\CSymbols_\cR$ such that $c=root(r)$ for some $\ell\to r\in\cR$ 
is constructor normalizing if and only if it is $\muCan$-terminating.
\end{corollary}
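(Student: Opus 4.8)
The plan is to establish the two implications of the biconditional separately, each by invoking one of the two main results proved earlier and verifying that the hypotheses listed in the corollary supply precisely what those results demand.

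For the direction constructor normalizing $\Rightarrow$ $\muCan$-terminating, I would appeal to the second case of Theorem \ref{TheoConstructorNormalizationImpliesMuCanTerm}. The hypotheses needed there are orthogonality, strong compatibility, absence of collapsing rules, and $\muCan(c)=\emptyset$ for every constructor symbol $c$ occurring as the root of a right-hand side. Orthogonality is automatic, since every tree specification is orthogonal by definition; the remaining three conditions are exactly those assumed in the statement of the corollary. Hence Theorem \ref{TheoConstructorNormalizationImpliesMuCanTerm}(2) applies and delivers $\muCan$-termination from constructor normalization.

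For the direction $\muCan$-terminating $\Rightarrow$ constructor normalizing, I would apply Theorem \ref{TheoCanonicalMuTermAndConstructorNormalizationOfTreeSpec} with the choice $\mu=\muCan$. That theorem requires the TRS to be exhaustive, left-linear, and $\mu$ to lie in $\CRmaps{\cR}$. A tree specification is exhaustive by definition and left-linear because orthogonality subsumes left-linearity; and $\muCan\in\CRmaps{\cR}$ holds trivially, since $\muCan\sqsubseteq\muCan$ by reflexivity of $\sqsubseteq$. With these in place the theorem yields constructor normalization.

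Since both implications reduce to direct applications of established results whose side conditions are either built into the notion of a tree specification or explicitly assumed in the corollary, I do not anticipate any genuine obstacle. The only point requiring attention is selecting the correct case of Theorem \ref{TheoConstructorNormalizationImpliesMuCanTerm}: the first case would demand $\muCan(c)=\emptyset$ for \emph{all} constructors, which is stronger than what we assume, so it is the second, no-collapsing-rule case that must be used here.
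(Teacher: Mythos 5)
Your proposal is correct and follows essentially the same route as the paper, which presents this corollary as an immediate consequence of Theorem~\ref{TheoConstructorNormalizationImpliesMuCanTerm} (case~2, exactly as you select it) for the forward direction and Theorem~\ref{TheoCanonicalMuTermAndConstructorNormalizationOfTreeSpec} with $\mu=\muCan$ for the converse. Your verification of the side conditions (orthogonality, left-linearity and exhaustiveness being built into the definition of a tree specification, $\muCan\in\CRmaps{\cR}$ holding trivially, and the explicit choice of the no-collapsing-rule case over the stronger all-constructors case) matches the intended reading precisely.
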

Since productive tree specifications are constructor normalizing, we have the following.

\begin{corollary}
Productive shallow tree specifications $\cR$ are $\muCan$-terminating.
\end{corollary}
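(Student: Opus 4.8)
The plan is to reduce the statement to the equivalence already established for shallow tree specifications, so that essentially no new work is required. First I would unfold the definition of productivity: a productive tree specification is, by definition, one that is \emph{both} constructor normalizing \emph{and} data-finite. In particular, every productive $\cR$ is constructor normalizing, and the data-finiteness half is irrelevant to the termination conclusion we are after, so I would simply discard it.

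Having extracted that $\cR$ is constructor normalizing, I would invoke \rCor{CoroCanonicalMuTermAndConstructorNormalizationOfShallowTreeSpec}, which asserts that a shallow tree specification is constructor normalizing if and only if it is $\muCan$-terminating. Applying the `only if' direction to our constructor-normalizing $\cR$ yields $\muCan$-termination immediately, which is exactly the desired conclusion. The whole argument is thus a two-line composition of the definition of productivity with a previously proved corollary.

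If one prefers to argue from the underlying machinery rather than citing the corollary, the alternative route is to combine \rProp{PropShallowTreeSpecAreStronglyCompatible} with \rTh{TheoConstructorNormalizationImpliesMuCanTerm}. The proposition guarantees that a shallow tree specification is an orthogonal, strongly compatible constructor TRS satisfying $\muCan(c)=\emptyset$ for every constructor symbol $c\in\CSymbols$; this is precisely condition (1) of \rTh{TheoConstructorNormalizationImpliesMuCanTerm}. Feeding a constructor-normalizing such $\cR$ into that theorem then produces $\muCan$-termination.

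There is essentially no obstacle here, since the substantive content has been front-loaded into \rProp{PropShallowTreeSpecAreStronglyCompatible} and \rTh{TheoConstructorNormalizationImpliesMuCanTerm}. The only point that deserves a moment's care is verifying that a shallow tree specification actually satisfies \emph{every} hypothesis of the cited results---orthogonality (which is inherited from the tree-specification requirement), strong compatibility, and the constructor collapsing condition $\muCan(c)=\emptyset$---but all three are delivered wholesale by the proposition, so this check is routine.
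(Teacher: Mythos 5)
Your proposal is correct and matches the paper's own (implicit) argument exactly: the paper derives this corollary by noting that productive tree specifications are constructor normalizing by definition, and then applying the chain already established for shallow tree specifications, namely \rProp{PropShallowTreeSpecAreStronglyCompatible} together with condition (1) of \rTh{TheoConstructorNormalizationImpliesMuCanTerm} (equivalently, the `only if' direction of \rCor{CoroCanonicalMuTermAndConstructorNormalizationOfShallowTreeSpec}). Both of your routes are precisely this composition, including the correct observation that orthogonality comes from the tree-specification definition itself.
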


In \cite{Raffelsieper_ProductivityOfNonOrthogonalTRSs_EPTCS12}, Raffelsieper investigates productivity
of non-orthogonal TRSs.
However, he still requires left-linearity and exhaustiveness of $\cR$. Thus, our results in 
Section \ref{SecInfNormalizationProductivity} also apply to his framework.
Raffelsieper also introduces the notion of \emph{strong productivity} meaning that every maximal outermost-fair
$\cR$-sequence starting from a term of sort $\Delta$ is constructor head-normalizing 
\cite[Definition 6 and Proposition 7]{Raffelsieper_ProductivityOfNonOrthogonalTRSs_EPTCS12}.
He also uses termination of \csr\ to prove strong productivity of his \emph{proper specifications}.
He defines a replacement map $\mu_\cS$ (see \cite[Definition 11]{Raffelsieper_ProductivityOfNonOrthogonalTRSs_EPTCS12})
which is, however, \emph{less} restrictive than our replacement map
$\mu_\Delta$ in Theorem \ref{TheoCanonicalMuTermAndProductivityOfTreeSpec}.
Thus, his main result in this respect \cite[Theorem 12]{Raffelsieper_ProductivityOfNonOrthogonalTRSs_EPTCS12}
is a particular case of our Theorem \ref{TheoCanonicalMuTermAndProductivityOfTreeSpec}.

\section{Conclusions and future work}\label{SecConclusions}

We have identified Theorems \ref{TeoFormasMuNormYHeadNormalForm} and \ref{TheoReescConstructorHeadNormalForm} 
(originally in \cite{Lucas_CScompFuncFunLogProg_JFLP98}) as bearing the essentials of the use of termination of
\emph{canonical} \csr\ to prove productivity of rewrite systems (see the proofs of Theorems \ref{TheoCanonicalMuTermAndConstructorNormalizationOfTreeSpec} and  \ref{TheoCanonicalMuTermAndProductivityOfTreeSpec}).
Although termination of \csr\ had been used before to prove (and even characterize) productivity,
we believe that our presentation sheds new light on this connection and also shows that the use of
such well-known results about \csr\ also simplifies the proofs of the results that connect termination of
\csr\ and productivity.
Furthermore, the use of the canonical replacement map as one of the (bounding) components of the
replacement map at stake is new in the literature and improves on previous approaches that systematically 
use less restrictive replacement maps, thus losing opportunities to prove termination of \csr\ and hence 
productivity.
We improved Endrullis and Hendriks' approach because we avoid the use of transformations,
being able to directly prove productivity of a non-shallow TRS $\cR$ as termination of \csr\ for $\cR$ itself.
For instance, we directly prove productivity of $\cR$ in Example \ref{Ex6_8_EH11} without any transformation, whereas
Endrullis and Hendriks require the addition of new rules due to their second transformation (see
\cite[Example 6.8]{EndHen_LazyProductivityViaTermination_TCS11}).
In Example \ref{Ex5_3_EH11_Transformed}, we conclude 
productivity of $\cR'$ without using their second transformation.
As a matter of fact, we were able to find automatic proofs of productivity for all
the examples  in \cite{EndHen_LazyProductivityViaTermination_TCS11,%
Raffelsieper_ProductivityOfNonOrthogonalTRSs_EPTCS12,%
ZanRaf_ProvingProdInInfDataStructures_RTA10} by using Theorem 
\ref{TheoCanonicalMuTermAndProductivityOfTreeSpec}
together with \AProVE\ or \muterm\ to obtain the automatic proofs of termination of \csr.
Our results, though, do \emph{not} provide a characterization of productivity, as witnessed by Examples
\ref{Ex5_3_EH11} and \ref{Ex5_3_EH11_Transformed}.
In contrast to \cite{EndHen_LazyProductivityViaTermination_TCS11,ZanRaf_ProvingProdInInfDataStructures_RTA10},
which deal with \emph{orthogonal} (constructor-based) TRSs only, our results apply to \emph{left-linear} TRSs and 
supersede \cite{Raffelsieper_ProductivityOfNonOrthogonalTRSs_EPTCS12}
which applies to non-orthogonal TRSs which are still left-linear.

In the future, we plan to apply other powerful results about completeness of \csr\ in (infinitary) normalization and
computation of (possibly infinite) values to develop more general notions of productivity and apply them to 
broader classes of programs.

\paragraph{Acknowledgments.} I thank the anonymous referees for their comments and suggestions.

\bibliographystyle{eptcs}
\bibliography{biblio}

\begin{thebibliography}{10}
\providecommand{\bibitemdeclare}[2]{}
\providecommand{\surnamestart}{}
\providecommand{\surnameend}{}
\providecommand{\urlprefix}{Available at }
\providecommand{\url}[1]{\texttt{#1}}
\providecommand{\href}[2]{\texttt{#2}}
\providecommand{\urlalt}[2]{\href{#1}{#2}}
\providecommand{\doi}[1]{doi:\urlalt{http://dx.doi.org/#1}{#1}}
\providecommand{\bibinfo}[2]{#2}

\bibitemdeclare{article}{AlaGutLuc_CSDPs_IC10}
\bibitem{AlaGutLuc_CSDPs_IC10}
\bibinfo{author}{Beatriz \surnamestart Alarc{\'{o}}n\surnameend},
  \bibinfo{author}{Ra{\'{u}}l \surnamestart Guti{\'{e}}rrez\surnameend} \&
  \bibinfo{author}{Salvador \surnamestart Lucas\surnameend}
  (\bibinfo{year}{2010}): \emph{\bibinfo{title}{Context-sensitive dependency
  pairs}}.
\newblock {\sl \bibinfo{journal}{Inf. Comput.}}
  \bibinfo{volume}{208}(\bibinfo{number}{8}), pp. \bibinfo{pages}{922--968},
  \doi{10.1016/j.ic.2010.03.003}.

\bibitemdeclare{inproceedings}{AlaGutLucNav_ProvingTerminationPropertiesWithMUTERM_AMAST10}
\bibitem{AlaGutLucNav_ProvingTerminationPropertiesWithMUTERM_AMAST10}
\bibinfo{author}{Beatriz \surnamestart Alarc{\'{o}}n\surnameend},
  \bibinfo{author}{Ra{\'{u}}l \surnamestart Guti{\'{e}}rrez\surnameend},
  \bibinfo{author}{Salvador \surnamestart Lucas\surnameend} \&
  \bibinfo{author}{Rafael \surnamestart Navarro{-}Marset\surnameend}
  (\bibinfo{year}{2010}): \emph{\bibinfo{title}{Proving Termination Properties
  with mu-term}}.
\newblock In \bibinfo{editor}{Michael \surnamestart Johnson\surnameend} \&
  \bibinfo{editor}{Dusko \surnamestart Pavlovic\surnameend}, editors: {\sl
  \bibinfo{booktitle}{Algebraic Methodology and Software Technology - 13th
  International Conference, {AMAST} 2010, Lac-Beauport, QC, Canada, June 23-25,
  2010. Revised Selected Papers}}, {\sl \bibinfo{series}{Lecture Notes in
  Computer Science}} \bibinfo{volume}{6486}, \bibinfo{publisher}{Springer}, pp.
  \bibinfo{pages}{201--208}, \doi{10.1007/978-3-642-17796-5\_12}.

\bibitemdeclare{inproceedings}{Antoy_DefTrees_ALP92}
\bibitem{Antoy_DefTrees_ALP92}
\bibinfo{author}{Sergio \surnamestart Antoy\surnameend} (\bibinfo{year}{1992}):
  \emph{\bibinfo{title}{Definitional Trees}}.
\newblock In \bibinfo{editor}{H{\'{e}}l{\`{e}}ne \surnamestart
  Kirchner\surnameend} \& \bibinfo{editor}{Giorgio \surnamestart
  Levi\surnameend}, editors: {\sl \bibinfo{booktitle}{Algebraic and Logic
  Programming, Third International Conference, Volterra, Italy, September 2-4,
  1992, Proceedings}}, {\sl \bibinfo{series}{Lecture Notes in Computer
  Science}} \bibinfo{volume}{632}, \bibinfo{publisher}{Springer}, pp.
  \bibinfo{pages}{143--157}, \doi{10.1007/BFb0013825}.

\bibitemdeclare{book}{BaaNip_TermRewAllThat_1998}
\bibitem{BaaNip_TermRewAllThat_1998}
\bibinfo{author}{Franz \surnamestart Baader\surnameend} \&
  \bibinfo{author}{Tobias \surnamestart Nipkow\surnameend}
  (\bibinfo{year}{1998}): \emph{\bibinfo{title}{Term rewriting and all that}}.
\newblock \bibinfo{publisher}{Cambridge University Press}.

\bibitemdeclare{proceedings}{ClavelEtAl_MaudeBook_2007}
\bibitem{ClavelEtAl_MaudeBook_2007}
\bibinfo{editor}{Manuel \surnamestart Clavel\surnameend},
  \bibinfo{editor}{Francisco \surnamestart Dur{\'{a}}n\surnameend},
  \bibinfo{editor}{Steven \surnamestart Eker\surnameend},
  \bibinfo{editor}{Patrick \surnamestart Lincoln\surnameend},
  \bibinfo{editor}{Narciso \surnamestart Mart{\'{\i}}{-}Oliet\surnameend},
  \bibinfo{editor}{Jos{\'{e}} \surnamestart Meseguer\surnameend} \&
  \bibinfo{editor}{Carolyn~L. \surnamestart Talcott\surnameend}, editors
  (\bibinfo{year}{2007}): \emph{\bibinfo{title}{All About Maude - {A}
  High-Performance Logical Framework, How to Specify, Program and Verify
  Systems in Rewriting Logic}}. {\sl \bibinfo{series}{Lecture Notes in Computer
  Science}} \bibinfo{volume}{4350}, \bibinfo{publisher}{Springer},
  \doi{10.1007/978-3-540-71999-1}.

\bibitemdeclare{article}{DershKapPlais_RewRewRewRewRew_TCS91}
\bibitem{DershKapPlais_RewRewRewRewRew_TCS91}
\bibinfo{author}{Nachum \surnamestart Dershowitz\surnameend},
  \bibinfo{author}{St{\'{e}}phane \surnamestart Kaplan\surnameend} \&
  \bibinfo{author}{David~A. \surnamestart Plaisted\surnameend}
  (\bibinfo{year}{1991}): \emph{\bibinfo{title}{Rewrite, Rewrite, Rewrite,
  Rewrite, Rewrite, . .}}
\newblock {\sl \bibinfo{journal}{Theor. Comput. Sci.}}
  \bibinfo{volume}{83}(\bibinfo{number}{1}), pp. \bibinfo{pages}{71--96},
  \doi{10.1016/0304-3975(91)90040-9}.

\bibitemdeclare{article}{EndGraHenIsiKlo_ProductivityOfStreamDefinitions_TCS10}
\bibitem{EndGraHenIsiKlo_ProductivityOfStreamDefinitions_TCS10}
\bibinfo{author}{J{\"{o}}rg \surnamestart Endrullis\surnameend},
  \bibinfo{author}{Clemens \surnamestart Grabmayer\surnameend},
  \bibinfo{author}{Dimitri \surnamestart Hendriks\surnameend},
  \bibinfo{author}{Ariya \surnamestart Isihara\surnameend} \&
  \bibinfo{author}{Jan~Willem \surnamestart Klop\surnameend}
  (\bibinfo{year}{2010}): \emph{\bibinfo{title}{Productivity of stream
  definitions}}.
\newblock {\sl \bibinfo{journal}{Theor. Comput. Sci.}}
  \bibinfo{volume}{411}(\bibinfo{number}{4-5}), pp. \bibinfo{pages}{765--782},
  \doi{10.1016/j.tcs.2009.10.014}.

\bibitemdeclare{article}{EndHen_LazyProductivityViaTermination_TCS11}
\bibitem{EndHen_LazyProductivityViaTermination_TCS11}
\bibinfo{author}{J{\"{o}}rg \surnamestart Endrullis\surnameend} \&
  \bibinfo{author}{Dimitri \surnamestart Hendriks\surnameend}
  (\bibinfo{year}{2011}): \emph{\bibinfo{title}{Lazy productivity via
  termination}}.
\newblock {\sl \bibinfo{journal}{Theor. Comput. Sci.}}
  \bibinfo{volume}{412}(\bibinfo{number}{28}), pp. \bibinfo{pages}{3203--3225},
  \doi{10.1016/j.tcs.2011.03.024}.

\bibitemdeclare{inproceedings}{FutGogJouaMes_PrincOfOBJ2_POPL85}
\bibitem{FutGogJouaMes_PrincOfOBJ2_POPL85}
\bibinfo{author}{Kokichi \surnamestart Futatsugi\surnameend},
  \bibinfo{author}{Joseph~A. \surnamestart Goguen\surnameend},
  \bibinfo{author}{Jean{-}Pierre \surnamestart Jouannaud\surnameend} \&
  \bibinfo{author}{Jos{\'{e}} \surnamestart Meseguer\surnameend}
  (\bibinfo{year}{1985}): \emph{\bibinfo{title}{Principles of {OBJ2}}}.
\newblock In \bibinfo{editor}{Mary S.~Van \surnamestart Deusen\surnameend},
  \bibinfo{editor}{Zvi \surnamestart Galil\surnameend} \&
  \bibinfo{editor}{Brian~K. \surnamestart Reid\surnameend}, editors: {\sl
  \bibinfo{booktitle}{Conference Record of the Twelfth Annual {ACM} Symposium
  on Principles of Programming Languages, New Orleans, Louisiana, USA, January
  1985}}, \bibinfo{publisher}{{ACM} Press}, pp. \bibinfo{pages}{52--66},
  \doi{10.1145/318593.318610}.

\bibitemdeclare{inproceedings}{FutNak_AnOvCafeSpecEnv_ICFEM97}
\bibitem{FutNak_AnOvCafeSpecEnv_ICFEM97}
\bibinfo{author}{Kokichi \surnamestart Futatsugi\surnameend} \&
  \bibinfo{author}{Ataru~T. \surnamestart Nakagawa\surnameend}
  (\bibinfo{year}{1997}): \emph{\bibinfo{title}{An Overview of {CAFE}
  Specification Environment - An Algebraic Approach for Creating, Verifying,
  and Maintaining Formal Specifications over Networks}}.
\newblock In: {\sl \bibinfo{booktitle}{{ICFEM}}}, p. \bibinfo{pages}{170},
  \doi{10.1109/ICFEM.1997.630424}.

\bibitemdeclare{inproceedings}{GieSchThi_AproveForDPFramework_IJCAR06}
\bibitem{GieSchThi_AproveForDPFramework_IJCAR06}
\bibinfo{author}{J{\"{u}}rgen \surnamestart Giesl\surnameend},
  \bibinfo{author}{Peter \surnamestart Schneider{-}Kamp\surnameend} \&
  \bibinfo{author}{Ren{\'{e}} \surnamestart Thiemann\surnameend}
  (\bibinfo{year}{2006}): \emph{\bibinfo{title}{Automatic Termination Proofs in
  the Dependency Pair Framework}}.
\newblock In \bibinfo{editor}{Ulrich \surnamestart Furbach\surnameend} \&
  \bibinfo{editor}{Natarajan \surnamestart Shankar\surnameend}, editors: {\sl
  \bibinfo{booktitle}{Automated Reasoning, Third International Joint
  Conference, {IJCAR} 2006, Seattle, WA, USA, August 17-20, 2006,
  Proceedings}}, {\sl \bibinfo{series}{Lecture Notes in Computer Science}}
  \bibinfo{volume}{4130}, \bibinfo{publisher}{Springer}, pp.
  \bibinfo{pages}{281--286}, \doi{10.1007/11814771\_24}.

\bibitemdeclare{incollection}{GogWinMesFutJou_IntrodOBJ_2000}
\bibitem{GogWinMesFutJou_IntrodOBJ_2000}
\bibinfo{author}{Joseph \surnamestart Goguen\surnameend},
  \bibinfo{author}{Timothy \surnamestart Winkler\surnameend},
  \bibinfo{author}{Jos\'e \surnamestart Meseguer\surnameend},
  \bibinfo{author}{Kokichi \surnamestart Futatsugi\surnameend} \&
  \bibinfo{author}{Jean-Pierre \surnamestart Jouannaud\surnameend}
  (\bibinfo{year}{2000}): \emph{\bibinfo{title}{Introducing {OBJ}}}.
\newblock In: {\sl \bibinfo{booktitle}{Software Engineering with OBJ: Algebraic
  Specification in Action}}, \bibinfo{publisher}{Kluwer}, pp.
  \bibinfo{pages}{3--167}, \doi{10.1007/978-1-4757-6541-0}.

\bibitemdeclare{inproceedings}{HenMes_CompletenessOfCSOSSpec_RTA07}
\bibitem{HenMes_CompletenessOfCSOSSpec_RTA07}
\bibinfo{author}{Joe \surnamestart Hendrix\surnameend} \&
  \bibinfo{author}{Jos{\'{e}} \surnamestart Meseguer\surnameend}
  (\bibinfo{year}{2007}): \emph{\bibinfo{title}{On the Completeness of
  Context-Sensitive Order-Sorted Specifications}}.
\newblock In \bibinfo{editor}{Franz \surnamestart Baader\surnameend}, editor:
  {\sl \bibinfo{booktitle}{Term Rewriting and Applications, 18th International
  Conference, {RTA} 2007, Paris, France, June 26-28, 2007, Proceedings}}, {\sl
  \bibinfo{series}{Lecture Notes in Computer Science}} \bibinfo{volume}{4533},
  \bibinfo{publisher}{Springer}, pp. \bibinfo{pages}{229--245},
  \doi{10.1007/978-3-540-73449-9\_18}.

\bibitemdeclare{article}{HudPeyWad_RepFuncProgLangHaskell_SN92}
\bibitem{HudPeyWad_RepFuncProgLangHaskell_SN92}
\bibinfo{author}{Paul \surnamestart Hudak\surnameend}, \bibinfo{author}{Simon
  L.~Peyton \surnamestart Jones\surnameend}, \bibinfo{author}{Philip
  \surnamestart Wadler\surnameend}, \bibinfo{author}{Brian \surnamestart
  Boutel\surnameend}, \bibinfo{author}{Jon \surnamestart Fairbairn\surnameend},
  \bibinfo{author}{Joseph~H. \surnamestart Fasel\surnameend},
  \bibinfo{author}{Mar{\'{\i}}a~M. \surnamestart Guzm{\'{a}}n\surnameend},
  \bibinfo{author}{Kevin \surnamestart Hammond\surnameend},
  \bibinfo{author}{John \surnamestart Hughes\surnameend},
  \bibinfo{author}{Thomas \surnamestart Johnsson\surnameend},
  \bibinfo{author}{Richard~B. \surnamestart Kieburtz\surnameend},
  \bibinfo{author}{Rishiyur~S. \surnamestart Nikhil\surnameend},
  \bibinfo{author}{Will \surnamestart Partain\surnameend} \&
  \bibinfo{author}{John \surnamestart Peterson\surnameend}
  (\bibinfo{year}{1992}): \emph{\bibinfo{title}{Report on the Programming
  Language Haskell, {A} Non-strict, Purely Functional Language}}.
\newblock {\sl \bibinfo{journal}{{SIGPLAN} Notices}}
  \bibinfo{volume}{27}(\bibinfo{number}{5}), p.~\bibinfo{pages}{1},
  \doi{10.1145/130697.130699}.

\bibitemdeclare{article}{KennaKlopSleepVries_TransRedOrtTRS_IC95}
\bibitem{KennaKlopSleepVries_TransRedOrtTRS_IC95}
\bibinfo{author}{Richard \surnamestart Kennaway\surnameend},
  \bibinfo{author}{Jan~Willem \surnamestart Klop\surnameend},
  \bibinfo{author}{M.~Ronan \surnamestart Sleep\surnameend} \&
  \bibinfo{author}{Fer{-}Jan \surnamestart de~Vries\surnameend}
  (\bibinfo{year}{1995}): \emph{\bibinfo{title}{Transfinite Reductions in
  Orthogonal Term Rewriting Systems}}.
\newblock {\sl \bibinfo{journal}{Inf. Comput.}}
  \bibinfo{volume}{119}(\bibinfo{number}{1}), pp. \bibinfo{pages}{18--38},
  \doi{10.1006/inco.1995.1075}.

\bibitemdeclare{inproceedings}{Lucas_FundamentalsCSRR_SOFSEM95}
\bibitem{Lucas_FundamentalsCSRR_SOFSEM95}
\bibinfo{author}{Salvador \surnamestart Lucas\surnameend}
  (\bibinfo{year}{1995}): \emph{\bibinfo{title}{Fundamentals of
  Context=Sensitive Rewriting}}.
\newblock In \bibinfo{editor}{Miroslav \surnamestart Bartosek\surnameend},
  \bibinfo{editor}{Jan \surnamestart Staudek\surnameend} \&
  \bibinfo{editor}{Jir{\'{\i}} \surnamestart Wiedermann\surnameend}, editors:
  {\sl \bibinfo{booktitle}{{SOFSEM} '95, 22nd Seminar on Current Trends in
  Theory and Practice of Informatics, Milovy, Czech Republic, November 23 -
  December 1, 1995, Proceedings}}, {\sl \bibinfo{series}{Lecture Notes in
  Computer Science}} \bibinfo{volume}{1012}, \bibinfo{publisher}{Springer}, pp.
  \bibinfo{pages}{405--412}, \doi{10.1007/3-540-60609-2\_25}.

\bibitemdeclare{inproceedings}{Lucas_CSCompConfProg_PLILP96}
\bibitem{Lucas_CSCompConfProg_PLILP96}
\bibinfo{author}{Salvador \surnamestart Lucas\surnameend}
  (\bibinfo{year}{1996}): \emph{\bibinfo{title}{Context-Sensitive Computations
  in Confluent Programs}}.
\newblock In \bibinfo{editor}{Herbert \surnamestart Kuchen\surnameend} \&
  \bibinfo{editor}{S.~Doaitse \surnamestart Swierstra\surnameend}, editors:
  {\sl \bibinfo{booktitle}{Programming Languages: Implementations, Logics, and
  Programs, 8th International Symposium, PLILP'96, Aachen, Germany, September
  24-27, 1996, Proceedings}}, {\sl \bibinfo{series}{Lecture Notes in Computer
  Science}} \bibinfo{volume}{1140}, \bibinfo{publisher}{Springer}, pp.
  \bibinfo{pages}{408--422}, \doi{10.1007/3-540-61756-6\_100}.

\bibitemdeclare{inproceedings}{Lucas_NeedRedCSRR_ALP97}
\bibitem{Lucas_NeedRedCSRR_ALP97}
\bibinfo{author}{Salvador \surnamestart Lucas\surnameend}
  (\bibinfo{year}{1997}): \emph{\bibinfo{title}{Needed Reductions with
  Context-Sensitive Rewriting}}.
\newblock In \bibinfo{editor}{Michael \surnamestart Hanus\surnameend},
  \bibinfo{editor}{Jan \surnamestart Heering\surnameend} \&
  \bibinfo{editor}{Karl \surnamestart Meinke\surnameend}, editors: {\sl
  \bibinfo{booktitle}{Algebraic and Logic Programming, 6th International Joint
  Conference, {ALP} '97 - {HOA} '97, Southampton, U.K., Spetember 3-5, 1997,
  Proceedings}}, {\sl \bibinfo{series}{Lecture Notes in Computer Science}}
  \bibinfo{volume}{1298}, \bibinfo{publisher}{Springer}, pp.
  \bibinfo{pages}{129--143}, \doi{10.1007/BFb0027007}.

\bibitemdeclare{inproceedings}{Lucas_TransfEfEvFP_PLILP97}
\bibitem{Lucas_TransfEfEvFP_PLILP97}
\bibinfo{author}{Salvador \surnamestart Lucas\surnameend}
  (\bibinfo{year}{1997}): \emph{\bibinfo{title}{Transformations for Efficient
  Evaluations in Functional Programming}}.
\newblock In \bibinfo{editor}{Hugh \surnamestart Glaser\surnameend},
  \bibinfo{editor}{Pieter~H. \surnamestart Hartel\surnameend} \&
  \bibinfo{editor}{Herbert \surnamestart Kuchen\surnameend}, editors: {\sl
  \bibinfo{booktitle}{Programming Languages: Implementations, Logics, and
  Programs, 9th International Symposium, PLILP'97, Including a Special Trach on
  Declarative Programming Languages in Education, Southampton, UK, September
  3-5, 1997, Proceedings}}, {\sl \bibinfo{series}{Lecture Notes in Computer
  Science}} \bibinfo{volume}{1292}, \bibinfo{publisher}{Springer}, pp.
  \bibinfo{pages}{127--141}, \doi{10.1007/BFb0033841}.

\bibitemdeclare{article}{Lucas_CScompFuncFunLogProg_JFLP98}
\bibitem{Lucas_CScompFuncFunLogProg_JFLP98}
\bibinfo{author}{Salvador \surnamestart Lucas\surnameend}
  (\bibinfo{year}{1998}): \emph{\bibinfo{title}{Context-sensitive Computations
  in Functional and Functional Logic Programs}}.
\newblock {\sl \bibinfo{journal}{Journal of Functional and Logic Programming}}
  \bibinfo{volume}{1998}(\bibinfo{number}{1}).

\bibitemdeclare{article}{Lucas_CSRewStrat_IC02}
\bibitem{Lucas_CSRewStrat_IC02}
\bibinfo{author}{Salvador \surnamestart Lucas\surnameend}
  (\bibinfo{year}{2002}): \emph{\bibinfo{title}{Context-Sensitive Rewriting
  Strategies}}.
\newblock {\sl \bibinfo{journal}{Inf. Comput.}}
  \bibinfo{volume}{178}(\bibinfo{number}{1}), pp. \bibinfo{pages}{294--343},
  \doi{10.1006/inco.2002.3176}.

\bibitemdeclare{article}{Lucas_CompletenessOfContextSensitiveRewriting_IPL15}
\bibitem{Lucas_CompletenessOfContextSensitiveRewriting_IPL15}
\bibinfo{author}{Salvador \surnamestart Lucas\surnameend}
  (\bibinfo{year}{2015}): \emph{\bibinfo{title}{Completeness of
  context-sensitive rewriting}}.
\newblock {\sl \bibinfo{journal}{Inf. Process. Lett.}}
  \bibinfo{volume}{115}(\bibinfo{number}{2}), pp. \bibinfo{pages}{87--92},
  \doi{10.1016/j.ipl.2014.07.004}.

\bibitemdeclare{inproceedings}{Midd_CBNCompRootStableForm_POPL97}
\bibitem{Midd_CBNCompRootStableForm_POPL97}
\bibinfo{author}{Aart \surnamestart Middeldorp\surnameend}
  (\bibinfo{year}{1997}): \emph{\bibinfo{title}{Call by Need Computations to
  Root-Stable Form}}.
\newblock In \bibinfo{editor}{Peter \surnamestart Lee\surnameend},
  \bibinfo{editor}{Fritz \surnamestart Henglein\surnameend} \&
  \bibinfo{editor}{Neil~D. \surnamestart Jones\surnameend}, editors: {\sl
  \bibinfo{booktitle}{Conference Record of POPL'97: The 24th {ACM}
  {SIGPLAN-SIGACT} Symposium on Principles of Programming Languages, Papers
  Presented at the Symposium, Paris, France, 15-17 January 1997}},
  \bibinfo{publisher}{{ACM} Press}, pp. \bibinfo{pages}{94--105},
  \doi{10.1145/263699.263711}.

\bibitemdeclare{article}{NakOgaFut_ReducOpSymbolsAndBehavSpec_JSC10}
\bibitem{NakOgaFut_ReducOpSymbolsAndBehavSpec_JSC10}
\bibinfo{author}{Masaki \surnamestart Nakamura\surnameend},
  \bibinfo{author}{Kazuhiro \surnamestart Ogata\surnameend} \&
  \bibinfo{author}{Kokichi \surnamestart Futatsugi\surnameend}
  (\bibinfo{year}{2010}): \emph{\bibinfo{title}{Reducibility of operation
  symbols in term rewriting systems and its application to behavioral
  specifications}}.
\newblock {\sl \bibinfo{journal}{J. Symb. Comput.}}
  \bibinfo{volume}{45}(\bibinfo{number}{5}), pp. \bibinfo{pages}{551--573},
  \doi{10.1016/j.jsc.2010.01.008}.

\bibitemdeclare{book}{Ohlebusch_AdvTopicsTermRew_2002}
\bibitem{Ohlebusch_AdvTopicsTermRew_2002}
\bibinfo{author}{Enno \surnamestart Ohlebusch\surnameend}
  (\bibinfo{year}{2002}): \emph{\bibinfo{title}{Advanced topics in term
  rewriting}}.
\newblock \bibinfo{publisher}{Springer}, \doi{10.1007/978-1-4757-3661-8}.

\bibitemdeclare{inproceedings}{Raffelsieper_ProductivityOfNonOrthogonalTRSs_EPTCS12}
\bibitem{Raffelsieper_ProductivityOfNonOrthogonalTRSs_EPTCS12}
\bibinfo{author}{Matthias \surnamestart Raffelsieper\surnameend}
  (\bibinfo{year}{2011}): \emph{\bibinfo{title}{Productivity of Non-Orthogonal
  Term Rewrite Systems}}.
\newblock In \bibinfo{editor}{Santiago \surnamestart Escobar\surnameend},
  editor: {\sl \bibinfo{booktitle}{Proceedings 10th International Workshop on
  Reduction Strategies in Rewriting and Programming, {WRS} 2011, Novi Sad,
  Serbia, 29 May 2011.}}, {\sl \bibinfo{series}{{EPTCS}}}~\bibinfo{volume}{82},
  pp. \bibinfo{pages}{53--67}, \doi{10.4204/EPTCS.82.4}.

\bibitemdeclare{article}{Sijtsma_ProductivityOfRecursiveListDefinitions_TOPLAS89}
\bibitem{Sijtsma_ProductivityOfRecursiveListDefinitions_TOPLAS89}
\bibinfo{author}{Ben~A. \surnamestart Sijtsma\surnameend}
  (\bibinfo{year}{1989}): \emph{\bibinfo{title}{On the Productivity of
  Recursive List Definitions}}.
\newblock {\sl \bibinfo{journal}{{ACM} Trans. Program. Lang. Syst.}}
  \bibinfo{volume}{11}(\bibinfo{number}{4}), pp. \bibinfo{pages}{633--649},
  \doi{10.1145/69558.69563}.

\bibitemdeclare{book}{Terese_TermRewritingSystems_2003}
\bibitem{Terese_TermRewritingSystems_2003}
\bibinfo{author}{editor \surnamestart {T}e{R}e{S}e\surnameend}
  (\bibinfo{year}{2003}): \emph{\bibinfo{title}{Term Rewriting Systems}}.
\newblock \bibinfo{publisher}{Cambridge University Press}.

\bibitemdeclare{inproceedings}{ZanRaf_StreamProductivityOutermostTermination_EPTCS09}
\bibitem{ZanRaf_StreamProductivityOutermostTermination_EPTCS09}
\bibinfo{author}{Hans \surnamestart Zantema\surnameend} \&
  \bibinfo{author}{Matthias \surnamestart Raffelsieper\surnameend}
  (\bibinfo{year}{2009}): \emph{\bibinfo{title}{Stream Productivity by
  Outermost Termination}}.
\newblock In \bibinfo{editor}{Maribel \surnamestart
  Fern{\'{a}}ndez\surnameend}, editor: {\sl \bibinfo{booktitle}{Proceedings
  Ninth International Workshop on Reduction Strategies in Rewriting and
  Programming, {WRS} 2009, Brasilia, Brazil, 28th June 2009.}}, {\sl
  \bibinfo{series}{{EPTCS}}}~\bibinfo{volume}{15}, pp. \bibinfo{pages}{83--95},
  \doi{10.4204/EPTCS.15.7}.

\bibitemdeclare{inproceedings}{ZanRaf_ProvingProdInInfDataStructures_RTA10}
\bibitem{ZanRaf_ProvingProdInInfDataStructures_RTA10}
\bibinfo{author}{Hans \surnamestart Zantema\surnameend} \&
  \bibinfo{author}{Matthias \surnamestart Raffelsieper\surnameend}
  (\bibinfo{year}{2010}): \emph{\bibinfo{title}{Proving Productivity in
  Infinite Data Structures}}.
\newblock In \bibinfo{editor}{Christopher \surnamestart Lynch\surnameend},
  editor: {\sl \bibinfo{booktitle}{Proceedings of the 21st International
  Conference on Rewriting Techniques and Applications, {RTA} 2010, July 11-13,
  2010, Edinburgh, Scottland, {UK}}}, {\sl
  \bibinfo{series}{LIPIcs}}~\bibinfo{volume}{6}, \bibinfo{publisher}{Schloss
  Dagstuhl - Leibniz-Zentrum fuer Informatik}, pp. \bibinfo{pages}{401--416},
  \doi{10.4230/LIPIcs.RTA.2010.401}.

\end{thebibliography}

\end{document}